\newtheorem{theorem}{Theorem}
\newtheorem{lemma}{Lemma}
\newtheorem{remark}{Remark}
\newtheorem{proposition}{Proposition}
\def\Err{{\rm Err}}
\def\E{{\rm E}}
\def\E{\mathbb{E}}
\def\Var{\mathrm{Var}}
\def\Cov{\mathrm{Cov}}
\def\ty{\tilde{y}}
\def\tbx{\tilde{\bold{x}}}
\def\tx{\tilde{x}}
\def\hbeta{\hat{\beta}}
\def\tbeta{\tilde{\beta}}
\def\bSigma{\bold{\Sigma}}
\def\hQ{\widehat{Q}}
\def\tQ{\widetilde{Q}}
\def\bx{{\bf x}}
\title {Post model-fitting exploration via  a ``Next-Door'' analysis}
\author{Leying  Guan \thanks{Dept. of  Statistics,
    Stanford Univ, leying.guan@gmail.com},  and  Robert Tibshirani \thanks{Depts. of Biomedical Data Sciences, and Statistics,
    Stanford Univ, tibs@stanford.edu}\\
Stanford University } 
\begin{document}

\maketitle

\begin{abstract}
We propose a simple method for evaluating the model that has been chosen by an adaptive regression procedure, our main focus being  the lasso.
This procedure deletes each chosen predictor and refits the lasso to get a set of  models that are ``close'' to the one chosen, referred to as ``base model".  If the deletion of a predictor leads to significant deterioration in the model's predictive power, the predictor is called indispensable; otherwise, the nearby model is called acceptable and can serve as a good alternative to the base model. This provides both an assessment of the predictive contribution of
each variable and a set of alternative models that may be used in place of the chosen model.  

In this paper, we will focus on the cross-validation (CV) setting and a model's predictive power is measured by its CV error, with base model  tuned by cross-validation.  We propose a method for comparing the error rates of the base model with that of nearby models, and a p-value for testing whether a predictor is dispensable. We also propose a new quantity called model score which works similarly as the p-value for the control of type I error. Our proposal is closely related to the LOCO (leave-one-covarate-out) methods of (\cite{rinaldo2016bootstrapping}) and less so, to Stability Selection (\cite{meinshausen2010stability}).

We call this procedure ``Next-Door analysis'' since it examines models close to the base model. It  can be applied to Gaussian regression data, generalized linear models, and  other supervised learning  problems with $\ell_1$ penalization. It could also be applied to best subset and stepwise regression procedures. We have  implemented it in the R language as a library to accompany the well-known {\tt glmnet} library.
\end{abstract}   
\section{Introduction}
We consider  the usual regression or classification situation: we have
samples $(\bx_i,y_i)$, $i=1,2,\ldots n$ where $\bx_i=(x_{i1},\ldots x_{ip})^T$ and $y_i$
are the regressors and response for the $ith$ observation. 
In regression, $y_i$ is quantitative while in classification it takes on one of $K$ discrete values. We will focus for now on the regression problem,
but will discuss classification in Section \ref{sec:extension}.

We assume that an adaptive regression procedure has been fit to the data, and we want to assess the chosen model(base model). Our main focus in this paper is on the lasso, although procedures such as subset or stepwise regression may also be amenable to our approach. The lasso method solves the following problem(for simplicity, we have left out the intercept):
 \begin{equation}
\hat\beta={\rm argmin} \frac{1}{2n} \sum_i (y_i-\bx^T_i\beta)^2+\lambda | \beta| 
 \end{equation}
yielding  a final model with sparse coefficients $\hat{\beta}$, for a sufficiently large value of $\lambda$. The data analyst is often interested in the importance of the selected predictors. 

One way to measure the importance is to adopt a sub-model interpretation where we consider whether a predictor has a  non-zero coefficient in the selected model. Conditional on the selected model, we can form post-selection p-values for the non-zero coefficients (\cite{berk2013valid, lee2014exact, lee2016exact, tibshirani2016exact, tibshirani2015uniform, fithian2014optimal}).   Another way to measure its importance is  to consider if the deletion of this predictor leads to significant deterioration in the predictive power given a training procedure. If the answer is ``yes", this predictor is indispensable. Otherwise,  the new model trained without this feature is acceptable and may work as a substitute for the base model.

The measures coincide when there is no feature selection. When $n >> p$,  we can fit a full regression model
\[
 \hat\beta = {\rm argmin} \frac{1}{2n} \sum_i (y_i-\bx_i^T\beta)^2.
\]
If we restrict ourselves to the OLS regression only, the p-value that we obtain for each predictor reflects both (1) the significance of its coefficient being non-zero conditional in the current model, and (2) the deterioration in predictive power when the predictor is deleted and the model is refitted.  When $p$ is large
and especially when $p>n$, a full regression fit is not feasible, and the lasso is a popular approach for fitting. When we use the lasso penalty to select a model, however, these two criteria are different. The recent progress in the field of post selection inference has focused on the sub-model interpretation. In practice, researchers will sometimes be more interested in the second perspective.

Motivated by the discussion above, we propose to find and assess  models ``close" to the base model, a procedure that we  call  {\em Next-Door analysis}.  The idea is as follows. We first fit the usual lasso, using cross-validation  to choose $\lambda$. Then for each predictor in the support set,  we remove that predictor and refit the lasso to all of the remaining  predictors (not just the support set) using the chosen value of $\lambda$. This gives a nearby  model(proximal model) corresponding to the deletion of
 each of the member of support set. Finally,  we examine and evaluate each of these nearby models.  
 Algorithm \ref{alg:SFS} gives the details. 
\medskip
\begin{algorithm}
\centerline{\bf Algorithm 1: Next-Door analysis for the   lasso}
\label{alg:SFS}
\begin{enumerate}
 \item Fit the lasso with parameter $\lambda$ chosen by cross-validation. Let the solution be $\hat\beta(\lambda)$.
 Let $S$ be the active set where the coefficient in $\hat\beta(\lambda)$ is non-zero.
 \item For each $j \in S$,  solve the lasso problem with the coefficient for the $j^{th}$ predictor being fixed at 0:
 \begin{equation}
\hat\beta(\lambda; j)={\rm argmin}_{\beta_j = 0} \frac{1}{2n}  \sum_i (y_i-\bx^T_i\beta)^2 +\lambda |\beta|
 \end{equation}
Let $d_{j}$ be the increase in the true validation error for this model relative to the base model. 
 \item  Form an unbiased estimate of $d_j$ and test if predictor $j$ is {\em indispensable}: that is, test whether  $d_j$  is positive.
\end{enumerate}
\end{algorithm}

As outlined in Algorithm \ref{alg:SFS}, our test for indispensibiltiy is a test of $H_0: d_j\geq 0$. It is challenging since the candidate models and the hypothesis are data adaptive and involve selections. One main task of this paper is to provide a good estimate of the p-value for the above test taking into consideration the selections.


\begin{table}[ht]
\centering
\caption{\em Prostate cancer results. The leftmost column shows the fitted model from the lasso, and the remaining columns show the nearby models
corresponding to the removal of each predictor.} 
\label{tab:prostate}.
\begin{tabular}{lrrrrrrrr}
  \hline
 & base & lcavol & lweight & svi & lcp & lbph & pgg45 & age \\ 
  \hline
lcavol & 0.64 &  & 0.69 & 0.70 & 0.59 & 0.65 & 0.63 & 0.62 \\ 
  lweight & 0.27 & 0.37 &  & 0.30 & 0.27 & 0.35 & 0.27 & 0.26 \\ 
  svi & 0.25 & 0.46 & 0.29 &  & 0.22 & 0.21 & 0.27 & 0.25 \\ 
  lcp & -0.12 & 0.07 & -0.11 & -0.01 &  & -0.14 & -0.04 & -0.11 \\ 
  lbph & 0.18 & 0.21 & 0.29 & 0.14 & 0.19 &  & 0.18 & 0.17 \\ 
  pgg45 & 0.17 & 0.18 & 0.13 & 0.19 & 0.13 & 0.18 &  & 0.15 \\ 
  age & -0.08 & -0.02 & -0.03 & -0.09 & -0.07 & -0.05 & -0.07 &  \\ 
  gleason &  & 0.07 &  &  &  &  & 0.07 &  \\ \hline
  cv\_error & 0.61 & 0.90 & 0.65 & 0.64 & 0.62 & 0.61 & 0.63 & 0.60 \\ 
  debiased\_error & 0.62 & 0.94 & 0.66 & 0.66 & 0.63 & 0.62 & 0.62 & 0.62 \\ 
    test\_error & 0.51 & 0.87 & 0.49 & 0.56 & 0.50 & 0.50 & 0.47 & 0.53 \\ 
   \hline
  selection frequency &  & 1.00 & 1.00 & 0.96 & 0.78 & 1.00 & 0.88 & 0.74 \\ 
  model pvalue &  & 0.01 & 0.21 & 0.20 & 0.29 & 0.48 & 0.26 & 0.34 \\ 
  model score &  & 0.01 & 0.21 & 0.21 & 0.37 & 0.48 & 0.30 & 0.45 \\ 
  feature pvalue &  & 0.00 & 0.01 & 0.02 & 0.23 & 0.05 & 0.07 & 0.28 \\ 

 \hline 
   \end{tabular}
\end{table}

Table \ref{tab:prostate}  gives a preview of results from a Next-Door analysis. We
apply it  to a prostate cancer data set taken from \cite{friedman2001elements}. The data consists of $n=67$ training observations and 30 test observations. There are eight predictors.
The response is the log PSA for men who had  prostate cancer surgery.  Each column contains one set of model coefficients using a fixed training procedure. The columns corresponding to the proximal models  are ordered according to their de-biased CV errors(from small to large).  Details of the model p-value and  model score for the  ``indispensability test'' are provided in  Section \ref{sec:test}. The ``selection frequency" is the proportion of times that the  predictor is selected when the model fitting procedure is applied 50 times to bootstrap samples. The ``feature p-value'' is a post-selection p-value testing for non-zero coefficients. It is obtained using the R package {\tt selectiveInference}(\cite{lee2016exact}) .    The feature p-values suggest that several predictors are significant, but only   {\tt lcavol}  is indispensable considering the out-of-sample performance according to the model p-value and model score.
For example, {\tt lweight} is highly significant according to the feature p-value but not by the other two measures.:  the test error results suggest that the coefficients  on other predictors can be adjusted to produce a model with no much worse out of sample performance.

\subsection{Related work}
\label{subsec:related}
Next-Door analysis  measures the importance of a predictor by whether we can find a good model excluding this feature. It is closely related to the LOCO parameters described in \cite{rinaldo2016bootstrapping} and the variable  importance measures used in random forest(\cite{breiman2001random}). In the work of \cite{rinaldo2016bootstrapping}, a hold-out data set is available. They do model selection, hypothesis selection and model fitting using only the training  data.  For each selected predictor, they coerce it to have a zero coefficient and rerun the model selection  and  training procedure. They then compare the performance of the original model and the new model  in a hold-out validation set to evaluate its importance. They are able to do model free inference conditional on the training data. Later, \cite{markovic2017adaptive} suggests the use of marginalized LOCO. In the procedure of training a model  with lasso penalty, instead of conditioning on the training data, they condition on a penalty being selected as well as the selected feature set $E$. They retrain the models with all data using OLS with features in $E$ and features in $E\setminus{j}$, and  compare instead the prediction errors of these two models after  marginalizing out the randomness in the training.  Next-Door analysis essentially looks at a different type of marginalized LOCO parameter, without restrict ourselves to the selected feature set $E$.  It is different from the work of \cite{rinaldo2016bootstrapping}  or \cite{markovic2017adaptive} in the following ways:
\begin{enumerate}
\item Next-door analysis considers a different marginalization level. We marginalize out all randomness including the parameter tuning.
\item We do not have a hold-out data set and we measure the importance of a feature by the test error of the CV models.
\item After the penalty is chosen with CV, we fix it when leaving out a predictor and retraining the model to loosely control the model complexity so that it is similar to the original model. We can also vary this $\lambda$ as in  \cite{rinaldo2016bootstrapping}.  However, it does not seem to be necessary when we have  marginalized out the randomness in the penalty picking step.
\end{enumerate}
The answer of which marginalization level to consider should depend on how people make prediction in practice. For example, if we do not retrain the model with new data coming in, the LOCO conditional on the training data in \cite{rinaldo2016bootstrapping}  is more proper. However, if we repeat the whole training procedure including the parameter tuning, we may want  a fully marginalized quantity. 

If we look at it from a different perspective, our proposal  is also related to the low dimensional projection estimator (LDPE)(\cite{zhang2014confidence, zhu2017breaking, yu2018confidence}).  These estimators are  concerned with the question of  whether a predictor is important conditioning on all other predictors.  To deal with the high dimensionality, LDPE is constructed using good initial model coefficient estimates and the part of a predictor that is ``almost" orthogonal to other predictors. Our approach deals with high dimensionality through a different perspective and restricts ourselves to a small set of ``accessible" models, which are models close to the base model in Next-Door analysis.  Instead of looking at the coefficients, it looks directly  at the prediction error. Another less related procedure is ``Stability selection'' (\cite{meinshausen2010stability}). This method  identifies a set of ``stable'' variables that are selected with probability above a threshold by procedures like the lasso.  Like the p-values from post-selection inference, even if a predictor is selected with  reasonably high probability, it is still possible that  we can  find an alternative among the reachable models with similar prediction performance.  For example, if we have two predictors that are identical and each of them is very important to the response without conditioning on the other, neither of them should be indispensable, but the selection probability will be around $0.5$ for each of them.

The paper is organized as follows. In Section \ref{sec:test}, we formalize how to test whether the difference in CV test errors $d_j$ is large with the full marginalization. We give details of the test method  and give the definition of the model score in this section.  Section \ref{sec:sim}, we provide intensive simulations to show the good performance of suggested methods.  We apply Next-Door analysis  to some real data examples in Section \ref{sec:realdata}.  In Section \ref{sec:extension}, we discuss the extension of Next-Door analysis to other settings.
  
 \section{Test for indispensability with full marginalization}
 \label{sec:test}
 In this section we give details of methods for the ``indispensability  test''   in Step (3) of Algorithm 1 above.  Let $\Lambda := \{\lambda_1, \ldots, \lambda_{m}\}$ be the set of penalty parameters that we consider and  suppose that we divide the data into $V$ folds $\cup^V_{v=1}\mathcal{V}_v$ with equal size.  For any fixed penalty $\lambda_k$, $k = 1,2,\ldots, m$,  let $S_k$ be the set of predictors selected.  The CV errors for models trained with and without predictor $j$ are $Q_k$ and $Q_k^j$, defined as
\[
Q_k = \frac{1}{n}\sum^n_{i=1}Q_k (\bx_i, y_i), \;\;Q^j_k=  \frac{1}{n}\sum^n_{i=1}Q^j_k(\bx_i, y_i)
\]
where $Q_k (x_i, y_i)$ and $Q^j_k(x_i, y_i)$ are the loss for the sample $(\bx_i, y_i)$ in CV:
\[
Q_k (\bx_i, y_i) = \sum^V_{v=1}(y_i-\bx^T_i\hat{\beta}^v(\lambda_k))^2\mathbbm{1}_{i\in \mathcal{V}_v},\;\;Q^j_k (\bx_i, y_i) = \sum^V_{v=1}(y_i-\bx^T_i\hat{\beta}^v(\lambda_k;j))^2\mathbbm{1}_{i\in \mathcal{V}_v}
\]
 where $\hat{\beta}^v(\lambda_k)$ and $\hat{\beta}^v(\lambda_k; j)$ are the coefficients trained using data $\cup_{v'\neq v}  \mathcal{V}_{v'}$ with penalty $\lambda_k$. Let ${\rm Err}_k$ and ${\rm Err}_k^j$ be the CV test error defined as the expectation of validation errors:
\[
\Err_k = E[Q_k], \;\;\Err^j_k= E[Q^j_k].
\]
In practice, we will pick $\lambda_k$ according to a criterion $R$. In this section, we consider the case where we pick $\lambda_k= \lambda_{k^*}$ to minimize the randomized validation error(we will discuss the randomized error later). Other criterion could also be used. For example, one can use the CV one standard error rule (\cite{friedman2001elements}). 

The index $k^*$  chosen  is a randomized quantity -- if we do the selection with a different random seed, we can  end up with a different penalty $\lambda_{k^*}$. As we do not want to make judgement about predictor $j$ based on a random quantity, we marginalize out the randomness in $k^*$ and end up with the marginalized test error under the criterion $R$. We let  $O_{k^*}$ to be the event of selecting penalty index $k^*$, and $(\bx, y)$ be an independent sample generated from their joint distribution. The test error after marginalization is defined as 
\[
\Err^{R} = \frac{1}{V}\sum^m_{k^* = 1}E[\sum^V_{v=1}(y-\bx^T\hat{\beta}^v(\lambda_{k^*}))^2\mathbbm{1}_{O_{k^*}}],\;\;\Err^{j, R} = \frac{1}{V}\sum^m_{k^* = 1}E[\sum^V_{v=1}(y - \bx^T\hat{\beta}^v(\lambda_{k^*};j))^2\mathbbm{1}_{O_{k^*}}]
\]
We are interested in the following hypothesis:

\[
H_0:  \Err^{j, R} \leq \Err^{R}   \;\;\;\;vs. \;\;\;\;H_1:\Err^{j, R}> \Err^{R} 
\]
The two events below prevent us from using the observed validation errors to do the test  directly :
\begin{enumerate}
\item {\bf Selection event} $A_1$(model selection):  The selected $\lambda_{k^*}$ penalty achieves the smallest randomized CV errors among all $\lambda_k\in \Lambda$.
\item  {\bf Selection event} $A_2$(hypothesis selection):  $j$ is in the non-zero support $S_{k^*}$.
\end{enumerate}
To make the proposed method more generalizable to complicated settings, we consider the event $A_1$ and $A_2$ separately. Intuitively, the event $\{j\in S_{k^*}\}$ should only have small effect:  the fact that the predictor $j$ is selected will not typically  have a big influence on the error of a refitted model  that excludes this predictor, when the number of covariates is moderately large. However, the validation error obtained after selection event $A_1$ can be significantly biased (\cite{tibshirani2009bias}).  

We give definition of the randomized cross-validation error and construct a de-biased test error estimate in Section \ref{subsec:random}. In Section  \ref{subsec:bootstrap}, we describe the Bootstrap p-value with the de-biased test error estimate considering only the event $A_1$. In Section \ref{subsec:score}, we  propose a new importance measure called the model score, which uses the previous p-value  to construct a quantity which can control the type I error after both selections $A_1$ and $A_2$.  From a practical view, we recommend the use of the model score if the cost of falsely rejecting the null hypothesis is high; otherwise, the Bootstrap p-value constructed in Section \ref{subsec:bootstrap} usually works well and has higher power when signal detection is hard.

\subsection{Randomized cross-validation error and the de-biased error estimate}
\label{subsec:random}
For simplicity of notation, for a pre-fixed predictor $j$, we let $Q = (Q_1, \ldots ,  Q_m,Q^j_{1}, \ldots ,  Q^j_{m})$ be the sequence of CV errors where the first $m$ are from models using all predictors and the next $m$ are from models with predictor $j$ left out. Let $\Err=  (\Err_1, \ldots ,  \Err_m,\Err^j_{1}, \ldots ,  \Err^j_{m})$ be the their underlying test errors. We define two sequences of randomized pseudo errors,
\begin{align}
& \tQ^{\alpha}(\epsilon, z)=Q+\frac{\epsilon}{\sqrt{n}}+\sqrt{\frac{\alpha}{n}}z,\;\;\tQ^{\frac{1}{\alpha}}(\epsilon, z) = Q+\frac{\epsilon}{\sqrt{n}}-\sqrt{\frac{1}{n\alpha}}z
\label{eqn:parallel}
\end{align}
where  $\epsilon \sim \mathcal{N}(0,\gamma_1\sigma^2_0 \bold{I})$, $z\sim N(0,\hat{\Sigma}+\gamma_1\sigma^2_0\bold{I})$ with $\gamma_1$ and $\alpha$ being a positive constant and  $\sigma^2_0$ being the smallest diagonal elements of $\hat{\Sigma}$, an estimate of the covariance of $\sqrt{n}Q$.

We choose the model index $k^*$ to minimize the randomized validation errors $\tQ^{\alpha}_k(\epsilon, z)$ for $k = 1,2,\ldots, m$.  In other words, we let the event  $O_{k^*} = \{\tQ^{\alpha}_{k^*}(\epsilon, z) \leq \tQ^{\alpha}_k(\epsilon, z), \forall k=1,\ldots, m\}$.

The first term $\frac{\epsilon}{\sqrt{n}}$ is proposed by \cite{rinaldo2016bootstrapping} the avoid the technical problem when applying CLT to the LOCO parameter  in the sample splitting case. It is also proposed in  \cite{markovic2017adaptive} to make the randomized CV curves asymptotically normal with invertible covariance structure under suitable assumptions, which have a similar style to  the consistency, range, moment and dimension assumptions below.
\begin{itemize}
\item Consistency assumption: For every $\lambda$ and predictor index $j$ considered,  the lasso estimator $\hat{\beta}(\lambda)$, $\hat{\beta}(\lambda;j)$ are consistent  to  some fixed vectors $\beta(\lambda)$  and $\beta(\lambda;j)$ at the rate $n^{\frac{1}{4}}$:
\begin{align*}
&n E\|\hat{\beta}(\lambda) -\beta(\lambda)\|^4_2 \rightarrow 0, \;\;n E\|\hat{\beta}(\lambda;j) -\beta(\lambda;j)\|^4_2 \rightarrow 0
\end{align*}
\item Range assumption: For any sample size $n$, we consider only the range of $\lambda$ such that $\lambda < Cn^{-\frac{1}{4}}$ for a large enough constant $C$.
  \item Moment assumption: 
 \begin{itemize}
\item $\rm{Var}((y - \bx^T\beta(\lambda))^2)$, \; $\rm{Var}((y- \bx^T\beta(\lambda;j))^2) $ are in the range $[c, C]$ for some positive constants  $c$ and $C$.
\item $E[\|\bx\|_2^2(y -\bx^T\beta(\lambda))^2] \leq C$, \;$E[\|\bx\|_2^2(y -\bx^T\beta(\lambda; j))^2] \leq C$ for some positive constant $C$.
\item $E[\|\bx\|_2^4] < \infty$
 \end{itemize}
  \item Dimension assumption: the dimension $p$ and the number of penalty parameters $m$   considered  is finite.
\end{itemize}
\begin{remark}
The range assumption indicates that the $\lambda$ we considered depends on the sample size $n$, which is also what happens in practice.  When there is non collinearity,  the $\lambda$ is considered to be sufficiently large if $\sqrt{n}\lambda \rightarrow \infty$ (\cite{wainwright2009sharp}). The range assumption is  very mild in this sense.
\end{remark}
The second terms $\sqrt{\frac{\alpha}{n}}z$ and $\sqrt{\frac{1}{n\alpha}}z$ are introduced to make $\tQ^{\alpha}$ and $\tQ^{\frac{1}{\alpha}}$ marginally  and asymptotically independent under the assumptions above. This kind of parallel construction is proposed in \cite{harris2016prediction}. In their work, the author estimates the prediction error for estimators like relaxed LASSO in the linear regression when the noise in the response $y$ is homoscedastic Gaussian with variance $\sigma^2$. They also create two marginally independent responses $y^{\alpha}$ and $y^{\frac{1}{\alpha}}$ by adding noises $\sqrt{\alpha}\epsilon$ and $\frac{\epsilon}{\sqrt{\alpha}}$  to $y$ with $\epsilon\sim N(0, \sigma^2)$. Marginally, the prediction error estimate with $y^{\frac{1}{\alpha}}$ is unbiased for any selection performed using $y^{\alpha}$. When $\sqrt{n}Q$  is asymptotically normal, we also get an almost unbiased test error estimate using $\tQ^{\frac{1}{\alpha}}$ after selecting the model using $\tQ^{\alpha}$(\cite{guan2018test}). Algorithm \ref{alg:debias} gives details of the de-biased test error estimate and Theorem \ref{thm:randomized0} states that this procedure can successfully reduce the bias under assumptions above.
\begin{algorithm}
\centerline{\bf Algorithm \ref{alg:debias}: Debias Error Estimation with Randomization}
\label{alg:debias}
\begin{enumerate}
\item  Input the $n\times 2m$ error matrix $Q(\bx_i, y_i)$ and parameters $\alpha$, $\sigma^2_0$ , the number of repetitions $H$ and covariance $\hat{\Sigma}$. By default, we set $\gamma_1=\alpha= 0.1$, $H = 1000$. The default for  $\hat\Sigma$  is the sample covariance matrix. 
\item Generate  Gaussian noise $(\epsilon, z)$ and let $k^*$ be the index chosen using $ \widetilde{\rm{Q}}^{\alpha}(\epsilon, z)$.
\item Generate  $H$ samples of the additive noise pair: at the $h^{th}$ round, let  $(\epsilon_h, \; z_{h})$ be the random vector generated and $k^*_h$  be the index chosen.  The de-biased errors are given by:
$$\widehat{\Err} =\frac{1}{H}\sum^H_{h=1} \tQ^{\frac{1}{\alpha}}_{k^*_h}(\epsilon_h, z_h), \;\;\widehat{\Err}^j =\frac{1}{H}\sum^H_{h=1} \tQ^{\frac{1}{\alpha}}_{m+k^*_h}(\epsilon_h, z_h).$$
\item Output the de-biased error estimates: $\widehat{\Err}$, $\widehat{\Err}^j $.
\end{enumerate}
\end{algorithm}

Let $\Sigma$ be the covariance structure of $((y-\bx^T\beta(\lambda_1))^2, \ldots, (y-\bx^T\beta(\lambda_m))^2, (y-\bx^T\beta(\lambda_1;j))^2, \ldots, (y-\bx^T\beta(\lambda_m;j))^2)$.
\begin{theorem}
\label{thm:randomized0}
Suppose the consistency, range, moment and dimension assumptions hold. Let $\hat{\Sigma} $ be an estimate of $\Sigma$. If this estimate satisfies the following two requirements (1) $\|\hat{\Sigma} - \Sigma\|_{\infty}\overset{p}{\rightarrow} 0$ and (2) $E[\sum^{2m}_{k=1} \hat{\Sigma}_{k,k}] \leq C$ for some constant $C$, then we have
\begin{align*}
&\sqrt{n}(E[ \widehat{\Err}] - \Err^{R}) \rightarrow 0,\;\;\sqrt{n}(E[\widehat{\Err}^j]-\Err^{j, R}) \rightarrow 0
\end{align*}
\end{theorem}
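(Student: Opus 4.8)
\emph{Proof sketch.} The plan is to reduce the claim to the vanishing of a single selection-induced covariance and then kill that covariance using the ``antithetic'' structure of the two randomization terms. Since the $H$ draws in Algorithm~\ref{alg:debias} are i.i.d., $E[\widehat\Err]=E[\tQ^{1/\alpha}_{k^*}(\epsilon,z)]$ and $E[\widehat\Err^j]=E[\tQ^{1/\alpha}_{m+k^*}(\epsilon,z)]$ for a single draw. For $\ell\in\{1,\dots,2m\}$ put $\bar g_\ell=\frac1V\sum_{v=1}^V E_{(\bx,y)}[(y-\bx^T\hat\beta^v(\lambda_\ell))^2]$, the fresh-sample error of model $\ell$ (with $\hat\beta^v(\lambda_{m+k}):=\hat\beta^v(\lambda_k;j)$). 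Because a held-out fold is independent of the estimator fit on its complement, $E[Q_\ell]=E[\bar g_\ell]=\Err_\ell$, and conditioning the fresh sample on everything else gives $\Err^{R}=\sum_{k=1}^m E[\bar g_k\mathbbm{1}_{O_k}]$ and $\Err^{j,R}=\sum_{k=1}^m E[\bar g_{m+k}\mathbbm{1}_{O_k}]$. With $V_\ell:=\sqrt n\big(\tQ^{1/\alpha}_\ell(\epsilon,z)-\bar g_\ell\big)$ this yields
\[
\sqrt n\big(E[\widehat\Err]-\Err^{R}\big)=\sum_{k=1}^m E[\mathbbm{1}_{O_k}V_k],\qquad \sqrt n\big(E[\widehat\Err^j]-\Err^{j,R}\big)=\sum_{k=1}^m E[\mathbbm{1}_{O_k}V_{m+k}].
\]
Crucially $E[V_\ell]=\sqrt n(E[Q_\ell]-E[\bar g_\ell])=0$ \emph{exactly}, because $\epsilon$ and $z$ are conditionally mean-zero given the data; so it remains only to show $E[\mathbbm{1}_{O_k}V_\ell]\to0$ for each of the finitely many pairs $(k,\ell)$.

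\emph{CLT and antithetic cancellation.} Let $\mu_\ell=E[(y-\bx^T\beta(\lambda_\ell))^2]$ and $R_\ell=\frac1n\sum_i(y_i-\bx_i^T\beta(\lambda_\ell))^2$; the moment assumption gives $\sqrt n(R-\mu)\Rightarrow N(0,\Sigma)$. Using the consistency assumption $nE\|\hat\beta^v(\lambda)-\beta(\lambda)\|_2^4\to0$, the bounds $E\|\bx\|_2^4<\infty$ and $E[\|\bx\|_2^2(y-\bx^T\beta(\lambda))^2]\le C$, and the range restriction $\lambda<Cn^{-1/4}$ — so that the only $O_p(n^{-1/2})$ contribution to either $Q_\ell-R_\ell$ or $\bar g_\ell-\mu_\ell$, namely the penalty/first-order term $\lambda\,s^{T}(\hat\beta^v-\beta)$, is common to both and cancels in the difference — one obtains $\sqrt n(\bar g-\mu)\cp0$ and $\sqrt n(Q-\bar g)=\sqrt n(R-\mu)+o_p(1)\Rightarrow N(0,\Sigma)$. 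Now write $\sqrt n(\tQ^{\alpha}-\bar g)=\sqrt n(Q-\bar g)+\epsilon+\sqrt\alpha z$ and $\sqrt n(\tQ^{1/\alpha}-\bar g)=\sqrt n(Q-\bar g)+\epsilon-\frac1{\sqrt\alpha}z$. Conditionally on the data $(\epsilon,z)$ is Gaussian, independent across its two blocks and uncorrelated with every data-measurable quantity, with $\Var(\epsilon)=\gamma_1\sigma_0^2\bold{I}$ and $\Var(z)=\hat\Sigma+\gamma_1\sigma_0^2\bold{I}$; since $\hat\Sigma\cp\Sigma$ and $\sigma_0^2\cp\min_\ell\Sigma_{\ell\ell}>0$ (the moment assumption forces $\Sigma_{\ell\ell}\ge c$), a characteristic-function computation shows $\big(\sqrt n(Q-\bar g),\epsilon,z\big)$ converges jointly to mutually independent Gaussians. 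Hence $\big(\sqrt n(\tQ^{1/\alpha}-\bar g),\sqrt n(\tQ^{\alpha}-\bar g)\big)$ has a jointly Gaussian limit $(Z^*,U^*)$ whose cross-covariance is $\Sigma$ plus the limit of $\Var(\epsilon)-\Var(z)$, that is $\Sigma-\Sigma=0$; so $Z^*\perp U^*$ and $E[Z^*]=0$. This is the antithetic device of \cite{harris2016prediction,guan2018test}, and the $\epsilon/\sqrt n$ perturbation of \cite{rinaldo2016bootstrapping,markovic2017adaptive} is precisely what makes the limiting covariance of $U^*$ non-degenerate, through the ridge term $\gamma_1\sigma_0^2\bold{I}$.

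\emph{Passage to the limit.} Since $O_k=\{\tQ^{\alpha}_k\le\tQ^{\alpha}_{k'},\ k'\le m\}=\{U_k-U_{k'}+\sqrt n(\mu_k-\mu_{k'})+o_p(1)\le0\ \forall k'\}$, for indices $k$ with $\mu_k$ not minimal $\P(O_k)\to0$, for a strictly minimal $\mu_k$ it tends to $1$, and for indices tied at the minimum $\mathbbm{1}_{O_k}$ converges to the $\sigma(U^*)$-measurable indicator $\mathbbm{1}\{U^*_k\le U^*_{k'}\text{ over the tied block}\}$, whose boundary is Lebesgue-null by the non-degeneracy just noted. In all cases $(V_\ell,\mathbbm{1}_{O_k})\Rightarrow(Z^*_\ell,\mathbbm{1}_{O^\infty_k})$ with $Z^*_\ell$ independent of $\mathbbm{1}_{O^\infty_k}$ and $E[Z^*_\ell]=0$. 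Uniform integrability of $\{V_\ell^2\}_n$ — from $E[n(Q_\ell-\bar g_\ell)^2]\le C$ (the moment assumptions and the CLT above) and $E[\epsilon_\ell^2]+E[z_\ell^2]\le(1+2\gamma_1)E[\hat\Sigma_{\ell\ell}]$, bounded by requirement~(2) $E[\sum_\ell\hat\Sigma_{\ell\ell}]\le C$ — then gives $E[\mathbbm{1}_{O_k}V_\ell]\to E[\mathbbm{1}_{O^\infty_k}]E[Z^*_\ell]=0$; summing the finitely many terms (dimension assumption) finishes the proof.

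\emph{Main obstacle.} The hard part is the CLT for the cross-validated curves with \emph{estimated} coefficients: one must control the replacement of $\hat\beta^v$ by $\beta(\lambda)$ tightly enough that the $O_p(n^{-1/4})$ estimation error against the $O(n^{-1/4})$ penalty level cancels between $Q_\ell-R_\ell$ and $\bar g_\ell-\mu_\ell$ at order $n^{-1/2}$, and that the cross-fold dependence intrinsic to $V$-fold CV leaves only $o_p(1)$ residual correlation in $\sqrt n(Q-\bar g)$ — precisely the regime for which the $n^{1/4}$-consistency and range assumptions, together with the $\epsilon/\sqrt n$ randomization, are tailored. The antithetic step, by contrast, is a one-line variance identity once the CLT is in hand.
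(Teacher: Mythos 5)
Your proof is correct and, at its core, runs on the same engine as the paper's: (i) a CLT for the CV error curve obtained from the consistency/range/moment assumptions via the KKT bound $n^{1/4}|E[(y-\bx^T\beta)x_j]|\le C$; (ii) the antithetic cancellation $\Cov(\epsilon+\sqrt{\alpha}z,\ \epsilon-z/\sqrt{\alpha})=-\hat\Sigma\to-\Sigma$, which makes the two randomized curves asymptotically independent, with the ridge term $\gamma_1\sigma_0^2 I$ (and $\sigma_0^2\cp\min_\ell\Sigma_{\ell\ell}\ge c>0$) supplying the non-degeneracy; and (iii) a passage-to-the-limit step for $E[(\text{centered variable})\times\mathbbm{1}_{O_{k}}]$. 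Where you genuinely differ is the decomposition, and your version is tidier: by centering at the fresh-sample error $\bar g_\ell$ of the fitted models rather than at $\Err_\ell=E[Q_\ell]$, the identities $\Err^{R}=\sum_k E[\bar g_k\mathbbm{1}_{O_k}]$ and $E[V_\ell]=0$ hold exactly, which absorbs the paper's Lemma~\ref{lem:expectation} (whose entire role is to show $\sqrt n(\Err^R-\sum_k\Err_{k}P(O_{k}))\to0$ via $E[I_{k^*}^2]\to0$ and Cauchy--Schwarz) into the main reduction. You also replace the cited Proposition~\ref{prop:prop2} of \cite{guan2018test} with a direct weak-convergence-plus-uniform-integrability argument, invoking requirement (2) on $E[\sum_k\hat\Sigma_{k,k}]$ exactly where the paper does (the second-moment bound on the randomized curves); your explicit case analysis of $\mathbbm{1}_{O_k}$ (non-minimal $\mu_k$, unique minimizer, tied block with Lebesgue-null boundary) is precisely the content hidden in that proposition's a.e.-differentiability hypothesis, so nothing is lost. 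One small point of care: the ``cancellation in the difference'' of the first-order terms between $Q_\ell-R_\ell$ and $\bar g_\ell-\mu_\ell$ does not by itself give the separate claim $\sqrt n(\bar g-\mu)\cp 0$, which you need so that the drifts in $O_k$ are $\sqrt n(\mu_k-\mu_{k'})+o_p(1)$; for that you still need the KKT/range bound on $E[(y-\bx^T\beta)\bx]$ itself, as in the paper's proof of Proposition~\ref{prop:normality} --- but you cite the range restriction for exactly this purpose, so the argument is complete.
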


\begin{lemma}
\label{lem:covariance}
Suppose that the consistency, range, moment and dimension assumptions hold. Let $\hat{\Sigma} $ be the sample covariance structure:
\[
\hat{\Sigma}_{k, k' } = \frac{\sum^n_{i=1} (Q_k(\bx_i, y_i) - Q_k)(Q_{k'}(\bx_i, y_i) - Q_{k'})}{n}
\]
We have (1) $\|\hat{\Sigma} - \Sigma\|_{\infty}\overset{p}{\rightarrow} 0$ and (2) $E[\sum^{2m}_{k=1} \hat{\Sigma}_{k,k}] \leq C$ for some constant $C$.
\end{lemma}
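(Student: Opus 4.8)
The plan is to compare $\hat\Sigma$ to an \emph{oracle} sample covariance built from the population limits $\beta(\lambda_k)$ and $\beta(\lambda_k;j)$, and to argue separately that the oracle version converges to $\Sigma$ and that the gap between $\hat\Sigma$ and the oracle version is asymptotically negligible. Write $v(i)$ for the fold containing $i$. For each index $k$ (and analogously for the $j$-deleted indices) define the oracle per-sample loss $\bar Q_k(\bx_i,y_i)=(y_i-\bx_i^T\beta(\lambda_k))^2$, the oracle average $\bar Q_k=\frac1n\sum_i\bar Q_k(\bx_i,y_i)$, and the oracle covariance $\bar\Sigma_{k,k'}=\frac1n\sum_i(\bar Q_k(\bx_i,y_i)-\bar Q_k)(\bar Q_{k'}(\bx_i,y_i)-\bar Q_{k'})$. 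The moment assumption (in particular the uniform boundedness of $E[(y-\bx^T\beta(\lambda_k))^4]$ it entails, using $E\|\bx\|_2^4<\infty$ and boundedness of $\beta(\lambda_k)$ over the $\lambda$-range considered) makes $E[\bar Q_k(\bx,y)^2]<\infty$, so the weak law of large numbers gives $\bar\Sigma_{k,k'}\cp\Sigma_{k,k'}$ for every fixed pair; since the dimension assumption makes $m$ (hence the number of entries) finite, $\|\bar\Sigma-\Sigma\|_\infty\cp0$.

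The heart of the proof is controlling $\|\hat\Sigma-\bar\Sigma\|_\infty$. Put $e_i^k=y_i-\bx_i^T\beta(\lambda_k)$, $r_i^k=\bx_i^T(\hat\beta^{v(i)}(\lambda_k)-\beta(\lambda_k))$, and let $\delta_i^k=Q_k(\bx_i,y_i)-\bar Q_k(\bx_i,y_i)=-2e_i^kr_i^k+(r_i^k)^2$ be the per-sample perturbation, $\bar\delta^k=\frac1n\sum_i\delta_i^k$. Writing $Q_k(\bx_i,y_i)-Q_k=A_i^k+C_i^k$ with $A_i^k=\bar Q_k(\bx_i,y_i)-\bar Q_k$ and $C_i^k=\delta_i^k-\bar\delta^k$, expanding the product and averaging gives $\hat\Sigma_{k,k'}-\bar\Sigma_{k,k'}=\frac1n\sum_iA_i^kC_i^{k'}+\frac1n\sum_iC_i^kA_i^{k'}+\frac1n\sum_iC_i^kC_i^{k'}$. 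By Cauchy--Schwarz each of these is bounded by a product of factors of the form $\big(\frac1n\sum_i(A_i^k)^2\big)^{1/2}=\bar\Sigma_{k,k}^{1/2}=O_p(1)$ and $\big(\frac1n\sum_i(C_i^{k})^2\big)^{1/2}\le\big(\frac1n\sum_i(\delta_i^k)^2\big)^{1/2}$, so it suffices to show $\frac1n\sum_i(\delta_i^k)^2\cp0$ for each $k$.

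For that, bound $(\delta_i^k)^2\le 8(e_i^k)^2(r_i^k)^2+2(r_i^k)^4$ and take expectations, exploiting the cross-fitting structure: for $i\in\mathcal V_v$ the estimator $\hat\beta^{v}(\lambda_k)$ is trained on the complementary folds and hence is independent of $(\bx_i,y_i)$. Conditioning on $u=\hat\beta^{v(i)}(\lambda_k)-\beta(\lambda_k)$ and using $(\bx_i^Tu)^2\le\|\bx_i\|_2^2\|u\|_2^2$ together with the moment-assumption bounds $E[\|\bx\|_2^2(y-\bx^T\beta(\lambda_k))^2]\le C$ and $E\|\bx\|_2^4<\infty$ yields $E[(e_i^k)^2(r_i^k)^2]\le C\,E\|u\|_2^2$ and $E[(r_i^k)^4]\le C\,E\|u\|_2^4$. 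The consistency assumption, applied to the fold-wise estimators $\hat\beta^v(\lambda_k)$ (whose training sets have size proportional to $n$, so the same $n^{1/4}$ rate holds), gives $nE\|u\|_2^4\to0$, hence by Jensen $E\|u\|_2^2=o(n^{-1/2})$ and $E\|u\|_2^4=o(n^{-1})$; averaging over $i$ gives $E\big[\frac1n\sum_i(\delta_i^k)^2\big]\to0$, and Markov's inequality delivers $\frac1n\sum_i(\delta_i^k)^2\cp0$. Combining with the oracle step and the triangle inequality proves (1). For (2), note $\hat\Sigma_{k,k}\le\frac1n\sum_iQ_k(\bx_i,y_i)^2=\frac1n\sum_i(e_i^k-r_i^k)^4$; using $(a-b)^4\le 8a^4+8b^4$, the bound $E[(y-\bx^T\beta(\lambda_k))^4]\le C$, $E\|\bx\|_2^4<\infty$, and the same cross-fitting/consistency argument gives $E[Q_k(\bx_i,y_i)^2]\le 8C+8C\,\sup_nE\|u\|_2^4<\infty$ uniformly in $i$ and $n$; summing the $2m$ finite diagonal terms gives $E[\sum_{k=1}^{2m}\hat\Sigma_{k,k}]\le C$. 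The main obstacle is the perturbation analysis of the last two paragraphs: one must carefully use the cross-fitting independence to push conditional expectations through products involving $\hat\beta^{v(i)}$ and only then invoke the $n^{1/4}$ rate, and one must verify that the consistency assumption, stated for the full-data estimator, transfers to each fold-wise estimator $\hat\beta^v(\lambda)$.
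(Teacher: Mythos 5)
Your proof is correct and follows essentially the same route as the paper's: decompose $\hat\Sigma$ around the oracle covariance built from the limiting coefficients $\beta(\lambda_k)$, control the cross and quadratic perturbation terms via Cauchy--Schwarz and the bound $\frac1n\sum_i(\delta_i^k)^2\cp 0$, which in turn rests on the cross-fitting independence, the moment assumption, and the $n^{1/4}$ consistency rate (the paper packages this as its Proposition 1). The only departures are cosmetic: you bound the diagonal in part (2) by raw second moments of $Q_k(\bx_i,y_i)$ rather than the paper's fold-wise variance decomposition, and you explicitly flag the transfer of the consistency assumption to the fold-wise estimators $\hat\beta^v(\lambda)$, a point the paper leaves implicit.
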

\begin{remark}
Instead of estimating $\Sigma$ using paired Bootstrap which requires huge computational cost for large $p$, Lemma \ref{lem:covariance} suggests that we can use the sample covariance matrix estimate under assumptions in this paper. Another heuristic way to justify  the use of sample covariance estimate is from a perspective conditional on the training model, which can be found in \cite{lei2017cross}.
\end{remark}
Under assumptions above, we can also write $\Err^R$ and $\Err^{j, R}$ are weighted test error in terms of $\Err$. 
\begin{lemma}
\label{lem:expectation}
Suppose that the consistency, range, moment and dimension assumptions hold, then we have
\begin{align*}
&\sqrt{n}\left(\Err^R - \sum^m_{k^*=1}\Err_{k^*} P(O_{k^*}) )\right)\rightarrow 0,\;\;\sqrt{n}\left(\Err^{j,R} - \sum^m_{k^*=1}\Err_{k^*+m} P(O_{k^*}) \right)\rightarrow 0
\end{align*}
\end{lemma}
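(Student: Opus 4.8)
\textit{Proof strategy.} The plan is to express $\Err^R$ and $\sum_{k^*=1}^m\Err_{k^*}P(O_{k^*})$ both in terms of an independent test point $(\bx,y)$, show that their difference is a sum of finitely many covariances of the form $\Cov\big((y-\bx^T\hat\beta^v(\lambda_{k^*}))^2,\mathbbm{1}_{O_{k^*}}\big)$, and bound each such covariance by $o(n^{-1/2})$. The first step is to rewrite $\Err_{k^*}=E[Q_{k^*}]$. Since $Q_{k^*}(\bx_i,y_i)=(y_i-\bx_i^T\hat\beta^{v(i)}(\lambda_{k^*}))^2$, where $v(i)$ is the fold containing $i$, and the data are i.i.d.\ with $\hat\beta^v(\lambda_{k^*})$ a function only of the folds $v'\neq v$, exchangeability within each fold gives $\Err_{k^*}=\frac1V\sum_{v=1}^V E[(y-\bx^T\hat\beta^v(\lambda_{k^*}))^2]$ for $(\bx,y)$ independent of the training data, and likewise $\Err_{k^*+m}=\frac1V\sum_v E[(y-\bx^T\hat\beta^v(\lambda_{k^*};j))^2]$. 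Because the selection event $O_{k^*}$ is a function only of the training data and of the randomization noise $(\epsilon,z)$, and $(\bx,y)$ is independent of both, subtracting yields
\[
\Err^R-\sum_{k^*=1}^m\Err_{k^*}P(O_{k^*})=\frac1V\sum_{k^*=1}^m\sum_{v=1}^V\Cov\!\big((y-\bx^T\hat\beta^v(\lambda_{k^*}))^2,\,\mathbbm{1}_{O_{k^*}}\big),
\]
and an identical identity with $\hat\beta^v(\lambda_{k^*};j)$ for the $j$-version. Since $m$ is finite (dimension assumption) and $V$ is fixed, it suffices to show each covariance is $o(n^{-1/2})$.

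Fix $k^*,v$ and abbreviate $\beta=\beta(\lambda_{k^*})$, $\hat\beta=\hat\beta^v(\lambda_{k^*})$. The quantity $(y-\bx^T\beta)^2$ is a deterministic function of the independent test point, hence independent of $\mathbbm{1}_{O_{k^*}}$, so we may subtract it inside the covariance; writing
\[
(y-\bx^T\hat\beta)^2-(y-\bx^T\beta)^2=\underbrace{-2(y-\bx^T\beta)\,\bx^T(\hat\beta-\beta)}_{T_1}+\underbrace{\big(\bx^T(\hat\beta-\beta)\big)^2}_{T_2}.
\]
For $T_2$ I would use $|\Cov(T_2,\mathbbm{1}_{O_{k^*}})|\le\tfrac12\sqrt{E[T_2^2]}$ together with $E[T_2^2]\le E[\|\bx\|_2^4]\,E[\|\hat\beta-\beta\|_2^4]$ (Cauchy--Schwarz and independence of $\bx$ from $\hat\beta$), which is $O(1)\cdot o(n^{-1})$ by the moment and consistency assumptions; hence $\Cov(T_2,\mathbbm{1}_{O_{k^*}})=o(n^{-1/2})$.

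The cross term is the crux. Write $T_1=-2\,W^T(\hat\beta-\beta)$ with $W:=(y-\bx^T\beta)\bx$ independent of $(\hat\beta,O_{k^*})$; then, by that independence, $\Cov(T_1,\mathbbm{1}_{O_{k^*}})=-2\,E[W]^T\Cov(\hat\beta-\beta,\mathbbm{1}_{O_{k^*}})$, where $\Cov(\hat\beta-\beta,\mathbbm{1}_{O_{k^*}})$ is the vector of coordinatewise covariances. For the vector factor, $\|\Cov(\hat\beta-\beta,\mathbbm{1}_{O_{k^*}})\|_2\le\tfrac12\sqrt{E[\|\hat\beta-\beta\|_2^2]}\le\tfrac12\big(E[\|\hat\beta-\beta\|_2^4]\big)^{1/4}=o(n^{-1/4})$ by consistency and Jensen. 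For $E[W]$, since $\beta(\lambda_{k^*})$ is the population lasso solution, stationarity gives $E[W]=E[(y-\bx^T\beta(\lambda_{k^*}))\bx]=\lambda_{k^*}s$ with $\|s\|_\infty\le1$, so $\|E[W]\|_2\le\sqrt p\,\lambda_{k^*}=O(n^{-1/4})$ by the range assumption and finiteness of $p$. Multiplying, $\Cov(T_1,\mathbbm{1}_{O_{k^*}})=o(n^{-1/4})\cdot O(n^{-1/4})=o(n^{-1/2})$, completing the argument for $\Err^R$. For $\Err^{j,R}$ the argument is verbatim with $\hat\beta^v(\lambda_{k^*};j),\beta(\lambda_{k^*};j)$ replacing $\hat\beta^v(\lambda_{k^*}),\beta(\lambda_{k^*})$, noting that the $j$-th coordinate of $\hat\beta^v(\lambda_{k^*};j)-\beta(\lambda_{k^*};j)$ vanishes identically, so only the free coordinates of $E[(y-\bx^T\beta(\lambda_{k^*};j))\bx]$ enter $T_1$ and those are again bounded by $\lambda_{k^*}$ by the stationarity conditions of the constrained population lasso.

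I expect the cross term $\Cov(T_1,\mathbbm{1}_{O_{k^*}})$ to be the main obstacle: consistency of $\hat\beta$ alone only bounds its covariance with a bounded random variable by $o(n^{-1/4})$, which is not enough, and one must additionally use that the residual direction $E[(y-\bx^T\beta(\lambda))\bx]$ is of order $\lambda=o(n^{-1/4})$ --- i.e.\ combine the range assumption with the population-lasso KKT condition --- to turn $o(n^{-1/4})\cdot o(n^{-1/4})$ into the required $o(n^{-1/2})$. (If one prefers not to identify $\beta(\lambda)$ with the population lasso minimizer, the same $O(\lambda)$ bound on $E[(y-\bx^T\beta(\lambda))\bx]$ follows by passing to the limit in the empirical KKT conditions using the consistency assumption.) The remaining ingredients --- the exchangeability rewriting of $\Err_{k^*}$, the independence of the fresh test point from $O_{k^*}$, and the $T_2$ bound --- are routine given the stated moment, consistency and dimension assumptions.
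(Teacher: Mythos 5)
Your proof is correct, and it reaches the conclusion by a genuinely different route from the paper's. The paper reduces the lemma to its Proposition 1: it introduces $n$ fresh test points, writes $\sqrt{n}E[((y-\bx^T\hat\beta^v(\lambda_{k^*}))^2-\Err_{k^*})\mathbbm{1}_{O_{k^*}}]$ as $E[I_{k^*}\mathbbm{1}_{O_{k^*}}]$ with $I_{k^*}=\frac{1}{\sqrt n}\sum_i\delta_i$, and kills it by Cauchy--Schwarz using $E[I_{k^*}^2]\to 0$, which is the second statement of Proposition 1. You instead stay with a single fresh test point, recast the discrepancy as $\Cov\bigl((y-\bx^T\hat\beta^v)^2,\mathbbm{1}_{O_{k^*}}\bigr)$, and exploit the independence of $W=(y-\bx^T\beta)\bx$ from both $\hat\beta^v$ and the selection event to factor the cross term as $E[W]^T\Cov(\hat\beta-\beta,\mathbbm{1}_{O_{k^*}})$, yielding the product bound $O(n^{-1/4})\cdot o(n^{-1/4})$. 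The two arguments share the same essential ingredients --- the decomposition of $(y-\bx^T\hat\beta)^2-(y-\bx^T\beta)^2$ into linear and quadratic parts, the consistency rate, and crucially the bound $n^{1/4}\|E[(y-\bx^T\beta(\lambda))\bx]\|_\infty\le C$ obtained by passing to the limit in the KKT conditions under the range assumption (which is exactly how the paper derives it inside the proof of Proposition 1, since $\beta(\lambda)$ is only defined there as the limit of the lasso estimator, not as a population minimizer; your parenthetical covers this correctly). What your version buys is a self-contained and arguably cleaner argument that avoids the fresh-test-set-of-size-$n$ device and the attendant normalization bookkeeping in the paper's $E[I^2]$ computation; what the paper's version buys is reuse, since Proposition 1 is needed anyway for Theorem 1 and Lemma 1, so Lemma 2 comes almost for free once it is in place. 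Your exchangeability rewriting of $\Err_{k^*}$, the variance bound $|\Cov(T_2,\mathbbm{1})|\le\frac12\sqrt{E[T_2^2]}=o(n^{-1/2})$, and the coordinatewise bound $\|\Cov(\hat\beta-\beta,\mathbbm{1})\|_2\le\frac12(E\|\hat\beta-\beta\|_2^4)^{1/4}$ are all valid; the only (shared, minor) elision is that the consistency assumption is stated for the full-sample estimator while the CV fits use $(1-1/V)n$ observations, which both you and the paper implicitly identify.
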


Proofs of Theorem \ref{thm:randomized0} and Lemma  \ref{lem:covariance},  \ref{lem:expectation} are in Appendix \ref{app:proof}.
\subsection{Bootstrap p-value approximation}
\label{subsec:bootstrap}
We  look at the quantity $T=(\widehat{\Err}^j -\widehat{\Err} ) -(\rm{Err}^{j, R} -\rm{Err}^{R})$.  Theorem \ref{thm:randomized0} states that no matter what the parameter for the underlying population is, the $\sqrt{n}T$ will have mean very close to 0.  Hence we bootstrap this test statistic to approximate the true p-value. We expect that the cumulative distribution function of $T^*$, the test statistics from the Bootstrap sample, will be close to that of $T$: let $F(x)$ be the CDF of $T$ and $F^*(x)$ be the CDF of $T^*$, we take the approximation that $F(x) \approx F^*(x)$. The p-value for the null hypothesis is then calculated as  $p = 1-F^*(\widehat{\Err}^j -\widehat{\Err} )$.  Some  corrections can be introduced to improve the empirical performance of the type I error control.  Here, we apply two modifications:
\begin{enumerate}
\item It is possible that the distributions of $\sqrt{n}T^*$ and  $\sqrt{n}T$ are asymptotically degenerate. To account for this case,  instead of looking at the empirical distribution of $T^*$, we look at the empirical distribution of $T^*+w$, where $w\sim \frac{\gamma_2}{\sqrt{n}}N(0, \sigma^2_0)$ for a small constant $\gamma_2$. The larger $\gamma_2$ is, the less power we will have and we will be more conservative.
\item Let $\zeta_k = \sqrt{n}(Q_k - \Err_k)$, and  $\overline{\Err}_1$, $\overline{Q}_1$, $\overline{\zeta}_1$ be the means of the test error, validation error and $\zeta_k$ for the first $m$ models. We know that
\begin{align*}
&E[\sum^m_{k=1}(Q_k - \overline{Q}_1)^2] = E[\sum^m_{k=1}(\Err_k - \overline{\Err}_1+\frac{\zeta_k}{\sqrt{n}} - \frac{\overline{\zeta}_1}{\sqrt{n}})^2] = E[\sum^m_{k=1}(\Err_k - \overline{\Err}_1)^2]+\frac{\sum^m_{k=1}\Sigma_{k,k}}{n} - \frac{\sum^m_{k, k'=1}\Sigma_{k,k'}}{nm}
\end{align*}
We see that the Bootstrap population have inflated underlying test error dispersion due to noise.  Let $Q^s$ be a vector of size $2m$. To match the average variability among the first $m$ models'  test errors, we can let
\[
Q^s_{k} = \sqrt{\frac{[\sum^m_{k=1}(Q_k - \overline{Q}_1)^2 - (\frac{\sum^m_{k=1} \Sigma_{k,k}}{n} - \frac{\sum^m_{k,k' = 1}\Sigma_{k,k'}}{n m })]_+}{\sum^m_{k=1}(Q_k - \overline{Q}_1)^2 }}(Q_k-\overline{Q}_1)+\overline{Q}_1\,\; \forall k = 1,2,\ldots, m
\]
Similarly, let $\overline{Q}_2$ be the mean validation error  for the $m$ models in the second half, to match the average variability among the second $m$ models'  test errors, we let
\[
Q^s_{k} = \sqrt{\frac{[\sum^{2m}_{k={m+1}}(Q_k - \overline{Q}_2)^2 - (\frac{\sum^{2m}_{k=m+1} \Sigma_{k,k}}{n} - \frac{\sum^{2m}_{k,k' = m+1}\Sigma_{k,k'}}{n m })]_+}{\sum^{2m}_{k={m+1}}(Q_k - \overline{Q}_2)^2 }}(Q_k-\overline{Q}_2)+\overline{Q}_2\,\; \forall  k = m+1,m+2,\ldots, 2m
\]
The mean-rescaled Bootstrap is to do dootstrap in the population with the population mean $Q^s$ instead of $Q$. Let $Q(\bx_i, y_i)$ be $2m$ vector representing the the loss for sample $i$. The mean-rescaled bootstrap generates samples from the mean-rescaled population:
\begin{equation}
\label{eq:rescaleDist}
Q^*(\bx_i, y_i) \overset{i.i.d}{\sim} \{Q(\bx_1, y_1), Q(\bx_2, y_2), \ldots,  Q(\bx_n, y_n)\} - Q+Q^s
\end{equation}
The mean-rescaled Bootstrap statistics $T^*$ is the realization of the test statistics from the distribution above.  The p-value testing $H_0$ is  constructed as $p = P(\widehat{\Err}^{j} - \widehat{\Err} \leq T^*+w)$. We reject the null hypothesis when $p \leq \alpha$.
\end{enumerate}

We provide simulations of this approximate p-value's distribution in Appendix \ref{app:bootstrap},  the results show that approximate p-value using Bootstrap with the de-biased estimates is more uniform compared  to that from the bootstrap using the observed CV errors.

\noindent\rule{\textwidth}{0.4pt}\\
\centerline{\bf Bootstrap p-value approximation}
\begin{enumerate}
 \item Input: Level $\alpha$, the errors $\{Q(\bx_i, y_i), \; i = 1,2,\ldots, n\} $ (first $m$ correspond to the original model and the last $m$ correspond to models excluding predictor $j$), the number of bootstrap repetitions $B$, the extra noise level $\gamma_2$. By default, $B = 10000$ and $\gamma_2 = 0.05$.
 
 \item  De-biased estimate: We apply  Algorithm \ref{alg:debias} to get the de-biased test error estimate for the $\widehat{\Err}$ and $\widehat{\Err}^j$.
 \item Let $\{Q^s(\bx_i, y_i), \; i = 1,2,\ldots, n\}$ be the rescaled Bootstrap populations  defined in the right-hand-side of eq.(\ref{eq:rescaleDist}).
 \item Bootstrap: for each iteration $b$, we draw bootstrap samples from $\{Q^s(\bx_i, y_i), \; i = 1,2,\ldots, n\}$ and apply  Algorithm  \ref{alg:debias} to the Bootstrap samples. Let  $\widehat{\Err}_b$ and $\widehat{\Err}^j_b$ be the de-biased error estimates and $w_b\sim N(0, \frac{\gamma_2^2\sigma^2_0}{n})$, then the p-value is calculated as 
  \[
  \hat{p} = \frac{\sum^B_{b=1} \mathbbm{1}_{\{ \widehat{\Err}^j_b - \widehat{\Err}_b -Q^{s, R}- Q^{s, j, R}+w_b\geq \widehat{\Err}^j-\widehat{\Err} \} }}{B}
  \]
  where  $Q^{s, R}$ and $Q^{s,j, R}$ are the test errors under criterion $R$ using the mean-rescaled Bootstrap population:
  \[
  Q^{s, R} = E[\sum^m_{k^*=1}Q^s_{k^*}\mathbbm{1}_{O_{k^*}}],\;\;Q^{s, j, R} = E[\sum^m_{k^*=1}Q^s_{k^*+m}\mathbbm{1}_{O_{k^*}}]
  \]
  
   We reject the null hypothesis for the predictor $j$ if $\hat{p} \leq \alpha$.
 \end{enumerate}
\noindent\rule{\textwidth}{0.4pt}

\subsection{Model score: a conservative measure of importance}
\label{subsec:score}

In this section, we consider an additional post-processing step  to deal with the selection event $A_{2}$ and to guard against being overly optimistic. It first ignores $A_2$ and then accounts for it by discounting the importance of a predictor based on how frequently it is selected by the model. Let $p$ be a p-value considering only the selection event $A_1$, $O_{k^*}$ being the event such that we select the penalty $\lambda_{k^*}$ with criterion $R$. The model score is defined as $s = \frac{p}{\gamma_j}$,  where   $\gamma_j :=  \sum^m_{k^*=1}P(j\in S_{k^*}, O_{k^*})$ is the average selection frequency for predictor $j$ with criterion $R$. 
\begin{itemize}
\item Selection frequency assumption: as $n\rightarrow\infty$, the selection frequency of a predictor $j$  converges to a constant $\lim_{n\rightarrow\infty}\gamma_j = c_j\in[0,1]$
\end{itemize}
We only consider predictors whose selection frequency has limit greater than 0. For those predictors with non-vanishing selection frequency, we can control the type I error at level $\alpha$ asymptotically by rejecting only $s_j < \alpha$(Theorem \ref{thm:conservative}).

\begin{theorem}
\label{thm:conservative}
Let predictor $j$ be a predictor with $c_j > 0$, let $p$ be a the p-value constructed and $s = \frac{p}{\gamma_j}$. If the $p$ satisfies $\lim_{n\rightarrow \infty}P_{H_0}(p \leq \alpha ) \leq \alpha$ for any fixed $\alpha$, then we have  $\lim_{n\rightarrow \infty}P_{H_0}(s \leq \alpha|j\in S_{k^*}) \leq \alpha$.
\end{theorem}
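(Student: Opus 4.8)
The plan is to reduce the conditional type I error of the model score to the already-controlled \emph{marginal} type I error of the underlying p-value $p$: the conditioning event $\{j\in S_{k^*}\}$ contributes exactly a factor $\gamma_j$, which cancels the $1/\gamma_j$ built into $s=p/\gamma_j$. So this theorem is just the bridge that turns $A_1$-validity of $p$ into $(A_1\cap A_2)$-validity of $s$.

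First I would record the elementary identity
\[
P(j\in S_{k^*})=\sum_{k^*=1}^{m}P(j\in S_{k^*},\,O_{k^*})=\gamma_j ,
\]
which holds because the events $O_{1},\dots,O_{m}$ partition the sample space up to a null set; in particular $\gamma_j$ is a deterministic (distribution- and $n$-dependent) number, not a random quantity. Hence $\{s\le\alpha\}=\{p\le\alpha\gamma_j\}$, and
\[
P_{H_0}\!\left(s\le\alpha \,\middle|\, j\in S_{k^*}\right)
=\frac{P_{H_0}\!\left(p\le\alpha\gamma_j,\; j\in S_{k^*}\right)}{\gamma_j}
\le \frac{P_{H_0}\!\left(p\le\alpha\gamma_j\right)}{\gamma_j}.
\]

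Next I would pass to the limit using the selection-frequency assumption $\gamma_j\to c_j>0$ together with the hypothesis $\limsup_{n}P_{H_0}(p\le\alpha')\le\alpha'$ at fixed levels $\alpha'$. For any $\epsilon\in(0,c_j)$, all $n$ large enough satisfy $c_j-\epsilon\le\gamma_j\le c_j+\epsilon$, so
\[
\frac{P_{H_0}(p\le\alpha\gamma_j)}{\gamma_j}\le\frac{P_{H_0}\!\left(p\le\alpha(c_j+\epsilon)\right)}{c_j-\epsilon};
\]
taking $\limsup_{n}$ gives the bound $\alpha(c_j+\epsilon)/(c_j-\epsilon)$, and letting $\epsilon\downarrow 0$ yields $\limsup_{n}P_{H_0}(s\le\alpha\mid j\in S_{k^*})\le\alpha$, which is the claim.

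The only point requiring care --- the ``main obstacle,'' though it is mild --- is this last step: the threshold $\alpha\gamma_j$ moves with $n$, so the asymptotic validity of $p$ cannot be invoked verbatim and must be sandwiched between the fixed levels $\alpha(c_j\pm\epsilon)$. This is exactly where $c_j>0$ is essential: it keeps the denominator $\gamma_j$ bounded away from zero, so dividing by it does not blow up the probability. (If $\alpha(c_j+\epsilon)>1$ one just uses the trivial bound $P\le 1$, and if the limit of $P_{H_0}(p\le\alpha')$ does not exist the argument goes through unchanged with $\limsup$ in place of $\lim$.)
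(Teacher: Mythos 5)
Your proposal is correct and follows essentially the same route as the paper: the identity $P(j\in S_{k^*})=\gamma_j$, the reduction $\{s\le\alpha\}=\{p\le\alpha\gamma_j\}$, and the bound $P_{H_0}(s\le\alpha\mid j\in S_{k^*})\le P_{H_0}(p\le\alpha\gamma_j)/\gamma_j$ are exactly the paper's steps. Your $\epsilon$-sandwich for the $n$-dependent threshold $\alpha\gamma_j$ is in fact a more careful justification than the paper's one-line appeal to Slutsky's theorem, so no gap remains.
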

\begin{proof}
By definition, $P(j\in S_{k^*}) = \gamma_j$. Let $\frac{0}{0}:= \infty$, then $P_{H_0}(s \leq \alpha|j\in S_{k^*}) =\frac{P_{H_0}(p\leq \gamma_j\alpha, j\in S_{k^*})}{\gamma_j} \leq \frac{P_{H_0}(p\leq \gamma_j\alpha)}{\gamma_j}$.   We take the limit of the above inequality and apply the Slutsky's theorem to conclude the proof: $\lim_{n\rightarrow\infty} \frac{P_{H_0}(p\leq \gamma_j\alpha)}{\gamma_j} =\lim_{n\rightarrow\infty}\frac{P_{H_0}(p\leq c_j\alpha)}{c_j} \leq \alpha$.
\end{proof}
The denominator of the model score $s$ is $\gamma_j$, the frequency of a predictor being selected using criterion $\mathcal{R}$, which is also used by stability selection (\cite{meinshausen2010stability}).  In practice, we can estimate $\gamma_j$ by doing a paired bootstrap of $(X, Y)$, refitting the models, and picking the $k^*$ using the new error. We then estimate the frequency of predictor $j$ being selected in those models.  Also, we will set a small cut-off, say, 0.05, on the observed selection frequency $\gamma_j$ and we do not reject a predictor if $\gamma_j$ is smaller than that.

\section{Simulations}
\label{sec:sim}
In this section, we evaluate the  performance of our proposal in the  linear regression setting and compare them to predictor p-values from post-selection inference and a naive approach neglecting all selections when looking at the model errors.  We consider both the de-biased approach neglecting selection event $A_2$ ``{\em model pvalue}",  and the de-biased approach using model score to account for $A_2$ ``{\em model score}". The p-values using the naive approach is referred to as ``{\em model pvalue(naive)}". For the post selection inference approach, we consider the post selection feature p-value  ``{\em feature pvalue}" from the  {\tt selectiveInference} package(\cite{lee2016exact}), the post selection model p-value``{\em model pvalue(post selection)}" neglecting the selection event $A_2$, as described in Appendix \ref{sec:postSelection}. We include the later to support our claim that the event $A_2$ does not have significant influence.  At any given level $\alpha$,  a rejection using the model score is the same as a rejection using a p-value: we reject a hypothesis if its score is smaller than a given level $\alpha$. 

We generate $n=100$ observations  from a linear model
$$y_i=\beta_0+\sum_{j=1}^pX_{ij}\beta_j +Z_i$$
for different dimensions $p$ and sparsity levels $ s$. When $s \neq 0$, we set $\beta=(\frac{2}{1},\frac{2}{2},\ldots, \frac{2}{s}, 0,\ldots 0)$. For a given $(s, p)$, we examine the following four simulation settings:

\medskip

\textbf{Orthogonal Design}:  Let $X$ be standard Gaussian predictors and $Z_i$ be standard Gaussian.

\medskip

\textbf{Redundant Design I}: This is a setting  designed specifically for Next-Door analysis. The design matrix is in a way such that almost no predictor is indispensable. Let the first half predictors $X_{1:\frac{p}{2}}$ and $W$ be standard Gaussian predictors with length $\frac{p}{2}$, and the second half predictors $X_{(\frac{p}{2}+1): p} = 0.95X_{(1:\frac{p}{2})} + 0.05W$, and $Z_i$ standard Gaussian.

\medskip

\textbf{Correlated Design}: Let $X$ be Gaussian predictors with variance $1$ and ${\rm corr}(X_j,X_k)=0.5$, and $Z_i$ be standard Gaussian.

\medskip

\textbf{Redundant Design II}: Let the first half predictors $X_{1:\frac{p}{2}}$ be Gaussian with variance 1 and ${\rm corr}(X_j,X_k)=0.5$, and $W$ be standard Gaussian predictors with length $\frac{p}{2}$. The second half predictors $X_{(\frac{p}{2}+1): p} = 0.95X_{1:\frac{p}{2}} + 0.05W$, and $Z_i$ standard Gaussian.

Empirical type I error results for targeted $90\%$ coverage are given in  Table \ref{tab:typeIerrorsp}. For the type I error calculation, we look at the settings with $s \leq \frac{p}{2}$: We consider predictors with index $j\geq s+1$ in the non-redundant case and $j = \frac{p}{2}+1, \ldots, \frac{p}{2}+5$ in the redundant case. The power curves under non-redundant design with $(p, s) = (10, 5), (10, 10), (400, 5), (400, 30)$ are given in Figure \ref{fig:res1}.

\begin{table}[ht]
\centering
\caption{\em Empirical Type I error with pre-specified level $\alpha = 0.1$ under four different experiment settings and different $(p, s)$ combinations.}
\label{tab:typeIerrorsp}
\begin{adjustbox}{width = \textwidth, height = .07\textwidth}
\begin{tabular}{|l|rrrr|rrrr|rrrr|}
  \hline
 & &(p, s) = (10, 0) &&&& (p,s) = (10, 5) &&&&(p,s)=(10, 10)&& \\ 
  \hline
 & Orthogonal & RedundantI & Correlated & RedundantII & Orthogonal & RedundantI & Correlated & RedundantII & Orthogonal & RedundantI & Correlated & RedundantII \\ 
  \hline
model pvalue(naive) & 0.05 & 0.12 & 0.17 & 0.19 & 0.04 & 0.30 & 0.04 & 0.17 &  &  &  &  \\ 
  model pvalue & 0.03 & 0.06 & 0.08 & 0.05 & 0.03 & 0.10 & 0.04 & 0.06 &  &  &  &  \\ 
  model score & 0.02 & 0.04 & 0.06 & 0.03 & 0.02 & 0.09 & 0.04 & 0.05 &  &  &  &  \\ 
  model pvalue(post selection) & 0.07 & 0.08 & 0.08 & 0.08 & 0.07 & 0.08 & 0.08 & 0.08 &  &  &  &  \\ 
  feature pvalue & 0.14 & 0.16 & 0.13 & 0.21 & 0.13 & 0.54 & 0.09 & 0.56 &  &  &  &  \\ 
\end{tabular}
\end{adjustbox}
\begin{adjustbox}{width = \textwidth, height = .07\textwidth}
\begin{tabular}{|l|rrrr|rrrr|rrrr|}
  \hline
 & &(p, s) = (400, 0) &&&& (p,s) = (400, 5) &&&&(p,s)=(400, 30)&& \\ 
  \hline
 & Orthogonal & RedundantI & Correlated & RedundantII & Orthogonal & RedundantI & Correlated & RedundantII & Orthogonal & RedundantI & Correlated& RedundantII \\ 
  \hline
model pvalue(naive) & 0.05 & 0.13 & 0.09 & 0.10 & 0.09 & 0.09 & 0.15 & 0.05 & 0.38 & 0.51 & 0.20 & 0.34 \\ 
  model pvalue & 0.13 & 0.09 & 0.12 & 0.11 & 0.13 & 0.06 & 0.12 & 0.08 & 0.13 & 0.13 & 0.12 & 0.11 \\ 
  model score & 0.07 & 0.05 & 0.06 & 0.05 & 0.06 & 0.05 & 0.07 & 0.06 & 0.06 & 0.11 & 0.05 & 0.09 \\ 
  model pvalue(post selection) & 0.10 & 0.10 & 0.11 & 0.10 & 0.09 & 0.11 & 0.11 & 0.13 & 0.11 & 0.12 & 0.09 & 0.12 \\ 
  feature pvalue & 0.10 & 0.16 & 0.15 & 0.15 & 0.15 & 0.13 & 0.14 & 0.14 & 0.14 & 0.35 & 0.16 & 0.10 \\ \hline
\end{tabular}
\end{adjustbox}
\end{table}

From our simulations, both the bootstrap model p-value and post selection p-value considering only $A_1$ have reasonable performance in controlling the type I error on average. The model scores are conservative and perform well in controlling the type I error. The naive approach and feature p-value can not control the type I error as expected. When we look at Figure \ref{fig:res1}, we see that
\begin{enumerate}
\item Comparing the bootstrap model p-value for the marginalized test error and the post selection model p-value conditional on the penalty selected,  we can see that there is loss in power for the latter.
\item Comparing  model p-values with the feature p-value, we can see that conditioning on the whole selected feature set $E$ can lead to dramatic power loss in high dimensional setting.
\end{enumerate}
In practice, the model p-value that neglects the selection $A_1$ is generally well-behaved.  It also has less computational cost and higher power in extremely small signals. However, the  model score approach may be preferred in cases where exact  type-I error  control is essential.
 \begin{figure}
\begin{center}
\includegraphics[width=1\textwidth, height =.6\textwidth]{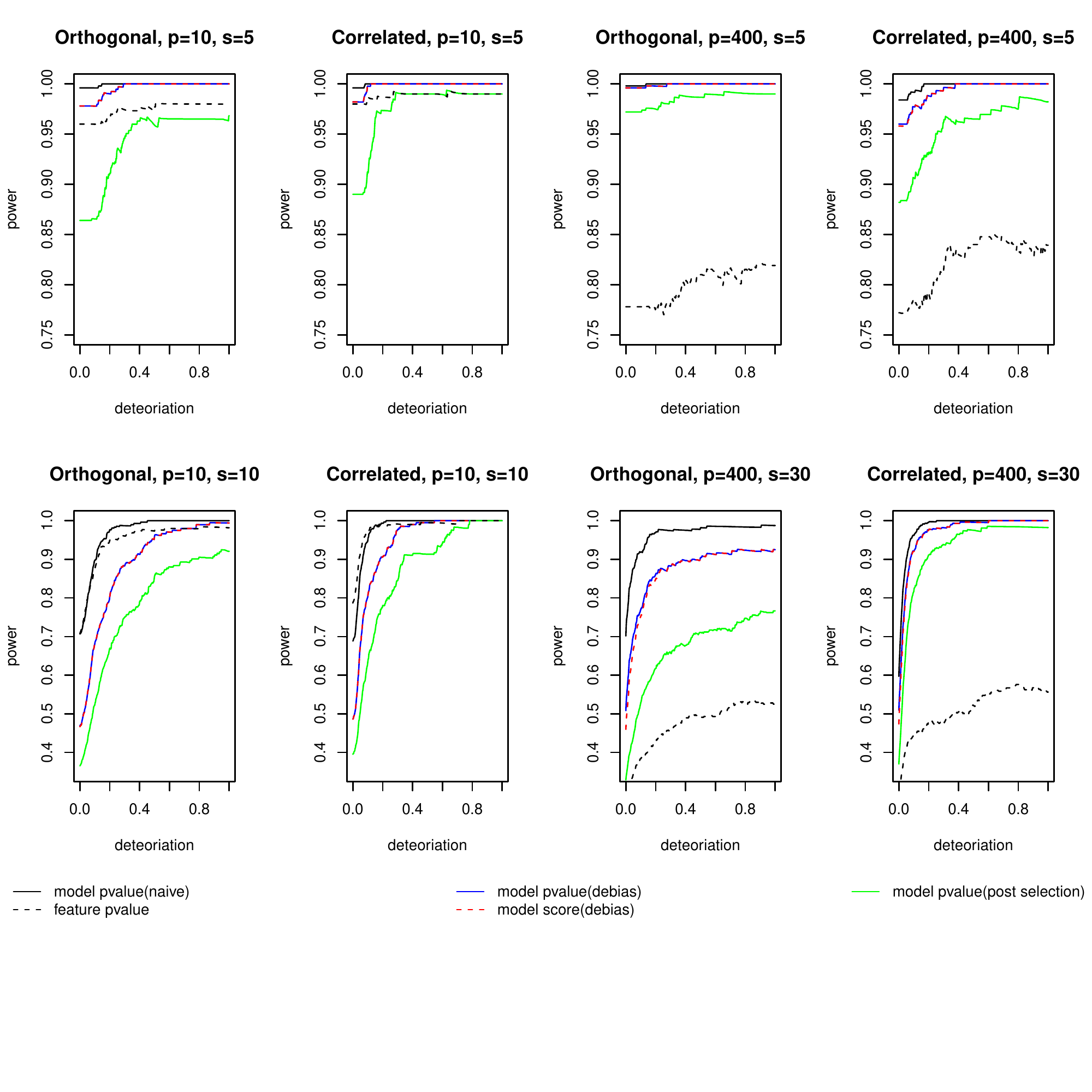}
\vskip -.5in
\caption{\em Power curves for six approaches with $(p, s) = (10, 5), (10, 10), (400, 5), (400, 30)$. The solid  blue and dashed red   curves are the power curve using the bootstrap p-value and its model score. The solid green  curve is that uses the post selection model p-value without considering $A_2$. The  solid black curve uses the t-test and nominal CV error  that ignores all selection events. The dashed black curve uses post selection for features.}
\label{fig:res1}
\end{center}
\end{figure}
\section{Real data applications}
\label{sec:realdata}
In this section, we provide  two more real data examples.  The second example uses the HIV data (\cite{rhee2003human}) where the author studied six nucleoside reverse transcriptase inhibitors that are used to treat HIV-1. We take the measurement of one of the inhibitors 3TC as the response, and the predictors are 240 mutation sites. There are 1073 samples in this experiment. We randomly split the samples into 800 training samples and 273 test samples.  The mutation site p184 is special, and has prediction power dominating all other sites. Results are given in Table \ref{tab:HIV}. The original randomized model selected 21 predictors. For the sake of space,  we do not include in the table those predictors whose model p-value and feature p-value are both greater than 0.2, and the test error for its proximal model is no greater than the test error for the original model(15 left). With a p-value cut-off being 0.1, four predictors, p184, p65, p215 and p69,  are found indispensable. For the rows, we do not show predictors that only appear in the models deleting p184, p65, p215 or p69. 

\begin{table}[h]
\centering
\caption{\em HIV dataset results. The leftmost column shows the fitted model lasso, and the remaining columns show the proximal model
corresponding to the removal of each predictor.} 
\label{tab:HIV}
\begin{adjustbox}{width = \textwidth}
\begin{tabular}{lrrrrrrrrrrrrrrrr}
  \hline
 &base & p184 & p65 & p215 & p69 & p228 & p33 & p172 & p75 & p54 & p210 & p67 & p115 & p90 & p151 & p62 \\ 
  \hline
p184 & 0.934 &  & 0.933 & 0.936 & 0.934 & 0.934 & 0.934 & 0.934 & 0.933 & 0.934 & 0.933 & 0.934 & 0.934 & 0.934 & 0.934 & 0.934 \\ 
  p65 & 0.110 & 0.083 &  & 0.105 & 0.110 & 0.110 & 0.110 & 0.110 & 0.110 & 0.110 & 0.111 & 0.110 & 0.111 & 0.110 & 0.111 & 0.112 \\ 
  p215 & 0.082 & 0.162 & 0.063 &  & 0.083 & 0.083 & 0.082 & 0.082 & 0.081 & 0.081 & 0.084 & 0.093 & 0.082 & 0.082 & 0.081 & 0.083 \\ 
  p69 & 0.024 & 0.023 & 0.021 & 0.025 &  & 0.026 & 0.023 & 0.024 & 0.024 & 0.024 & 0.023 & 0.028 & 0.024 & 0.023 & 0.024 & 0.024 \\ 
  p228 & 0.011 & 0.013 & 0.006 & 0.015 & 0.017 &  & 0.011 & 0.011 & 0.012 & 0.011 & 0.012 & 0.005 & 0.011 & 0.012 & 0.011 & 0.012 \\ 
  p33 & 0.004 & 0.010 & 0.003 & 0.005 & 0.003 & 0.004 &  & 0.004 & 0.005 & 0.004 & 0.004 & 0.003 & 0.004 & 0.004 & 0.004 & 0.003 \\ 
  p172 & 0.002 & -0.034 & 0.001 & 0.001 & 0.004 & 0.003 & 0.002 &  & 0.002 & 0.002 & 0.003 &  & 0.002 & 0.003 & 0.002 & 0.002 \\ 
  p75 & 0.009 &  & 0.004 & 0.004 & 0.010 & 0.010 & 0.009 & 0.009 &  & 0.008 & 0.010 & 0.013 & 0.009 & 0.010 & 0.009 & 0.012 \\ 
  p54 & -0.011 &  & -0.010 & -0.010 & -0.012 & -0.011 & -0.011 & -0.011 & -0.010 &  & -0.013 & -0.009 & -0.011 & -0.012 & -0.011 & -0.012 \\ 
  p210 & 0.015 & 0.019 & 0.019 & 0.025 & 0.014 & 0.017 & 0.015 & 0.016 & 0.016 & 0.016 &  & 0.018 & 0.015 & 0.015 & 0.015 & 0.015 \\ 
  p67 & 0.038 & 0.041 & 0.039 & 0.056 & 0.043 & 0.035 & 0.038 & 0.038 & 0.039 & 0.037 & 0.039 &  & 0.038 & 0.038 & 0.038 & 0.037 \\ 
  p116 & 0.002 & 0.005 &  & 0.003 & 0.003 & 0.002 & 0.002 & 0.002 & 0.001 & 0.002 & 0.002 & 0.002 & 0.002 & 0.002 & 0.009 & 0.002 \\ 
  p115 & 0.000 & 0.085 & 0.009 &  &  & 0.001 & 0.001 & 0.000 & 0.001 & 0.001 & 0.000 & 0.000 &  &  & 0.000 & 0.001 \\ 
  p90 & 0.009 & 0.044 & 0.009 & 0.010 & 0.008 & 0.010 & 0.009 & 0.009 & 0.010 & 0.010 & 0.009 & 0.009 & 0.009 &  & 0.009 & 0.009 \\ 
  p118 & 0.010 &  & 0.012 & 0.009 & 0.011 & 0.011 & 0.010 & 0.010 & 0.011 & 0.010 & 0.013 & 0.011 & 0.010 & 0.009 & 0.010 & 0.009 \\ 
  p77 & 0.006 & 0.005 & 0.018 & 0.005 & 0.005 & 0.006 & 0.006 & 0.006 & 0.009 & 0.007 & 0.007 & 0.007 & 0.006 & 0.006 & 0.008 & 0.007 \\ 
  p151 & 0.010 &  & 0.013 & 0.008 & 0.010 & 0.010 & 0.010 & 0.010 & 0.010 & 0.010 & 0.009 & 0.011 & 0.010 & 0.010 &  & 0.010 \\ 
  p62 & 0.010 & 0.051 & 0.025 & 0.014 & 0.011 & 0.010 & 0.009 & 0.010 & 0.012 & 0.010 & 0.009 & 0.007 & 0.010 & 0.010 & 0.010 &  \\ 
  p181 & 0.001 & -0.056 & 0.014 & 0.003 & 0.002 & 0.002 & 0.001 & 0.001 & 0.001 & 0.001 & 0.001 &  & 0.001 & 0.002 & 0.001 & 0.001 \\ 
  p41 & 0.007 & 0.186 & 0.003 & 0.053 & 0.006 & 0.008 & 0.007 & 0.007 & 0.008 & 0.007 & 0.014 & 0.005 & 0.007 & 0.007 & 0.007 & 0.008 \\ 
  p219 & 0.007 & 0.035 & 0.004 & 0.013 & 0.009 & 0.011 & 0.007 & 0.007 & 0.006 & 0.008 & 0.005 & 0.028 & 0.007 & 0.007 & 0.007 & 0.007 \\ 
  p25 &  &  &  &  &  &  & 0.001 &  &  &  &  &  &  &  &  &  \\ 
  p125 &  &  & 0.001 & 0.002 & 0.001 & 0.000 & 0.000 & 0.000 & 0.000 & 0.000 &  &  &  & 0.000 &  &  \\ 
  p200 &  & 0.006 & -0.003 &  &  &  &  & 0.000 &  &  &  & -0.001 &  &  &  & -0.001 \\ \hline
  cv\_error & 0.062 & 0.828 & 0.078 & 0.064 & 0.063 & 0.062 & 0.062 & 0.062 & 0.062 & 0.062 & 0.062 & 0.063 & 0.062 & 0.062 & 0.062 & 0.062 \\ 
  debiased\_error & 0.063 & 0.847 & 0.078 & 0.065 & 0.064 & 0.064 & 0.063 & 0.063 & 0.063 & 0.063 & 0.063 & 0.063 & 0.063 & 0.063 & 0.062 & 0.062 \\ 
    test\_error & 0.063 & 0.872 & 0.085 & 0.065 & 0.064 & 0.063 & 0.063 & 0.063 & 0.063 & 0.061 & 0.063 & 0.063 & 0.063 & 0.063 & 0.063 & 0.063 \\ 
   \hline
  selection frequency &  & 1.000 & 1.000 & 1.000 & 1.000 & 0.900 & 0.660 & 0.840 & 0.740 & 0.680 & 1.000 & 1.000 & 0.800 & 1.000 & 0.780 & 0.760 \\ 
  model pvalue &  & 0.000 & 0.000 & 0.000 & 0.037 & 0.689 & 0.152 & 0.503 & 0.505 & 0.231 & 0.690 & 0.141 & 0.838 & 0.433 & 0.463 & 0.378 \\ 
  model score &  & 0.000 & 0.000 & 0.000 & 0.037 & 0.765 & 0.231 & 0.599 & 0.682 & 0.339 & 0.690 & 0.141 & 1.047 & 0.433 & 0.593 & 0.497 \\ 
  feature pvalue &  & 0.000 & 0.000 & 0.007 & 0.017 & 0.134 & 0.277 & 0.440 & 0.220 & 0.011 & 0.121 & 0.032 & 0.905 & 0.089 & 0.097 & 0.109 \\ 

    \hline
\end{tabular}
\end{adjustbox}
\end{table}
As a third example, we apply Next-Door analysis  to a  gastric cancer  dataset, consisting of measurements on $p=2,200$ proteins,  from each of $12,480$ pixels (observations) obtained from 14 patients.
These data are  presented in  \cite{eberlin2014molecular}. In this example, instead selecting the model with the smallest randomized CV error, we use the CV one standard error rule. The CV folds are the same as the patients' id. The outcome is cancer ($Y=1$) versus normal ($Y=0$), and we fit a lasso-regularized logistic regression. The errors are based on the deviance from the fitted model. The results are shown in Table \ref{tab:gast}.  We select 19 proteins in the base model and 28 proteins in total are selected for all 19 proximal models.  Among the 19 proteins in the base model, we keep only those who has at least one p-value no greater than 0.05(15 left). Among the 28 proteins, we keep in the rows only those proteins whose coefficients' magnitude is at least 0.05 (20 left) , to save the space. The  model p-values suggest the first 6 proteins(\#487, \#476, \#607, \#431, \#1049, \#552 ) can be important to the models' predictive power with a p-value cut-off being 0.1.  The protein \#1509 is on the boundary(model p-value being 0.127), it might also be important as its proximal model has de-biased cv error larger than that of three other selected proteins.  In this example, because of the heterogeneity of $(\bx, y)$ from different patients(14 patients in the training  data and 5 patients in the test data),  the alignment between the test error and the CV error is not as good as the previous two examples.
\begin{table}[h]
\centering
\caption{\em Gastric Cancer  Data.  The leftmost column shows the fitted model lasso, and the remaining columns show the proximal model corresponding to the removal of each predictor.}
\label{tab:gast}.
\begin{adjustbox}{width = \textwidth, height = .28\textwidth}
\begin{tabular}{lrrrrrrrrrrrrrrrr}
  \hline
 &base & 487 & 476 & 607 & 1509 & 431 & 1049 & 552 & 1648 & 608 & 1374 & 606 & 1453 & 423 & 894 & 171 \\ 
  \hline
487 & 0.578 &  & 0.555 & 0.584 & 0.589 & 0.639 & 0.618 & 0.563 & 0.576 & 0.583 & 0.580 & 0.591 & 0.582 & 0.569 & 0.560 & 0.607 \\ 
  476 & 0.339 & 0.272 &  & 0.356 & 0.347 & 0.354 & 0.338 & 0.397 & 0.333 & 0.352 & 0.334 & 0.326 & 0.335 & 0.347 & 0.364 & 0.347 \\ 
  607 & 0.165 & 0.188 & 0.185 &  & 0.154 & 0.162 & 0.184 & 0.186 & 0.165 & 0.192 & 0.166 & 0.194 & 0.167 & 0.175 & 0.182 & 0.168 \\ 
  1509 & -0.206 & -0.213 & -0.219 & -0.196 &  & -0.223 & -0.207 & -0.201 & -0.207 & -0.204 & -0.206 & -0.205 & -0.232 & -0.212 & -0.193 & -0.205 \\ 
  431 & 0.244 & 0.527 & 0.258 & 0.242 & 0.278 &  & 0.220 & 0.245 & 0.238 & 0.241 & 0.229 & 0.265 & 0.249 & 0.238 & 0.360 & 0.220 \\ 
  1049 & 0.198 & 0.314 & 0.193 & 0.213 & 0.200 & 0.181 &  & 0.237 & 0.205 & 0.208 & 0.199 & 0.207 & 0.196 & 0.200 & 0.259 & 0.201 \\ 
  552 & 0.200 & 0.174 & 0.241 & 0.213 & 0.196 & 0.201 & 0.227 &  & 0.201 & 0.215 & 0.205 & 0.200 & 0.202 & 0.205 & 0.194 & 0.204 \\ 
  1648 & 0.064 & 0.041 & 0.047 & 0.066 & 0.067 & 0.055 & 0.081 & 0.068 &  & 0.061 & 0.072 & 0.076 & 0.064 & 0.064 & 0.007 & 0.058 \\ 
  1038 & 0.144 & 0.091 & 0.137 & 0.127 & 0.168 & 0.201 & 0.188 & 0.129 & 0.159 & 0.137 & 0.154 & 0.162 & 0.155 & 0.147 & 0.155 & 0.146 \\ 
  551 & 0.061 & 0.131 & 0.033 & 0.081 & 0.073 & 0.087 & 0.054 & 0.114 & 0.065 & 0.069 & 0.063 & 0.066 & 0.061 & 0.062 & 0.073 & 0.058 \\ 
  608 & 0.083 & 0.101 & 0.100 & 0.115 & 0.080 & 0.080 & 0.098 & 0.112 & 0.082 &  & 0.087 & 0.089 & 0.084 & 0.086 & 0.081 & 0.084 \\ 
  475 & 0.085 & 0.078 & 0.294 & 0.071 & 0.051 & 0.063 & 0.051 & 0.047 & 0.098 & 0.077 & 0.096 & 0.105 & 0.083 & 0.145 & 0.095 & 0.075 \\ 
  1596 & 0.021 &  & 0.074 & 0.009 & 0.011 & 0.108 & 0.001 & 0.022 & 0.057 & 0.015 & 0.033 & 0.014 & 0.013 & 0.019 & 0.082 & 0.033 \\ 
  1374 & 0.050 & 0.064 & 0.035 & 0.052 & 0.048 & 0.031 & 0.053 & 0.063 & 0.056 & 0.056 &  & 0.057 & 0.049 & 0.048 & 0.038 & 0.049 \\ 
  606 & 0.098 & 0.173 & 0.065 & 0.160 & 0.093 & 0.128 & 0.122 & 0.100 & 0.109 & 0.108 & 0.107 &  & 0.095 & 0.096 & 0.039 & 0.092 \\ 
  1453 & -0.043 & -0.084 & -0.028 & -0.050 & -0.171 & -0.055 & -0.036 & -0.053 & -0.043 & -0.046 & -0.043 & -0.039 &  & -0.050 & -0.055 & -0.043 \\ 
  423 & 0.088 & 0.019 & 0.113 & 0.113 & 0.112 & 0.078 & 0.090 & 0.110 & 0.088 & 0.093 & 0.085 & 0.086 & 0.092 &  & 0.061 & 0.092 \\ 
  894 & 0.242 & 0.166 & 0.267 & 0.257 & 0.226 & 0.298 & 0.294 & 0.240 & 0.228 & 0.243 & 0.240 & 0.226 & 0.245 & 0.234 &  & 0.244 \\ 
  171 & 0.035 & 0.209 & 0.051 & 0.040 & 0.031 & 0.002 & 0.043 & 0.046 & 0.030 & 0.036 & 0.034 & 0.029 & 0.034 & 0.039 & 0.040 &  \\ 
  898 &  &  &  &  &  &  &  &  &  &  &  &  &  &  & 0.059 &  \\ \hline
  cv\_error & 0.862 & 0.910 & 0.903 & 0.876 & 0.874 & 0.873 & 0.872 & 0.872 & 0.865 & 0.863 & 0.861 & 0.860 & 0.858 & 0.853 & 0.849 & 0.848 \\ 
  debiased\_error & 0.862 & 0.910 & 0.903 & 0.876 & 0.874 & 0.873 & 0.873 & 0.872 & 0.864 & 0.863 & 0.861 & 0.860 & 0.858 & 0.853 & 0.848 & 0.848 \\ 
    test\_error & 0.501 & 0.480 & 0.516 & 0.504 & 0.505 & 0.506 & 0.507 & 0.519 & 0.502 & 0.507 & 0.506 & 0.496 & 0.504 & 0.495 & 0.513 & 0.500 \\ 
   \hline
  selection frequency &  & 0.700 & 0.650 & 0.575 & 0.800 & 0.550 & 0.425 & 0.550 & 0.625 & 0.550 & 0.550 & 0.600 & 0.550 & 0.525 & 0.350 & 0.400 \\ 
  model pvalue &  & 0.033 & 0.001 & 0.013 & 0.127 & 0.046 & 0.091 & 0.000 & 0.814 & 0.812 & 0.745 & 0.648 & 0.532 & 0.893 & 0.992 & 0.959 \\ 
  model score &  & 0.047 & 0.001 & 0.023 & 0.159 & 0.085 & 0.214 & 0.000 & 1.302 & 1.476 & 1.355 & 1.079 & 0.968 & 1.700 & 2.833 & 2.397 \\ 
  feature pvalue &  & 0.000 & 0.000 & 0.003 & 0.000 & 0.000 & 0.000 & 0.000 & 0.000 & 0.000 & 0.000 & 0.000 & 0.005 & 0.000 & 0.000 & 0.000 \\  \hline
\end{tabular}\end{adjustbox}
\end{table}
\section{Extensions}
\label{sec:extension}
\subsection{Generalization to other supervised learning algorithms}
The methods proposed here can be extended in a straightforward manner to Cox's proportional hazards model and  the class of generalized linear models where the outcome $Y$ depends on a parameter vector $\eta$:
\begin{equation}
\eta =X\beta
\label{eq:model2}
\end{equation}
In this case, we have the penalized negative log likelihood as the objective function
\begin{equation}
J(\beta) = -\ell( \beta) + \lambda |\beta|.
\label{eq:obj2}
\end{equation}
The event $A_1$ can be characterized using the corresponding new CV loss we are interested in, and the selection frequency $\gamma_j$ remains unchanged.  As a result, the p-value neglecting the selection $A_2$ and the model score are both easily obtained in more complicated scenarios where we do not know how to characterize the selection for event $A_2$ in an efficient way, even for a black box model.  The Gastric cancer data set is an example where we apply the Next-door analysis to a classification problem.

In this paper, we considered some assumptions under which the asymptotic joint normality holds for the randomized CV curve.  In practice, we observe that such a jointly normality usually hold approximately, and the Bootstrap p-value itself is also usually robust. 

\subsection{A model with better out-of-sample performance}
The model p-value and model score can serve as an alternative feature importance measure even when we considers features only in the current selected feature set $E$. It provides a different ordering of feature importance compared with p-value, correlation or partial correlation with the response. It  can work better sometimes in practice as it considers the out of sample error directly.   We provide examples using the prostate data, HIV data and the gastric cancer data. We consider the selected feature set in the first stage and retrain the model using the training data with OLS/logistic regression. We build a sequence of nested model where we add feature one by one according to their model p-value, model score and feature p-value. For the prostate data set and gastric cancer data set, we start from models containing one feature. For the HIV data set, we start from models containing 2 features as the test errors are much larger for models with only one feature compared with the others. In Figure \ref{fig:performance}, we evaluate the models out of sample performance in the test set as a function of the number of features added.  The vertical dashed line is the number where we want to stop based on the model p-value.   In all three cases, the model p-values produced more sparse models with near optimal performance(smallest test errors achieved using the nested procedure).

\begin{figure}[!h]
\begin{center}
\includegraphics[width=.32\textwidth, height = .35\textwidth]{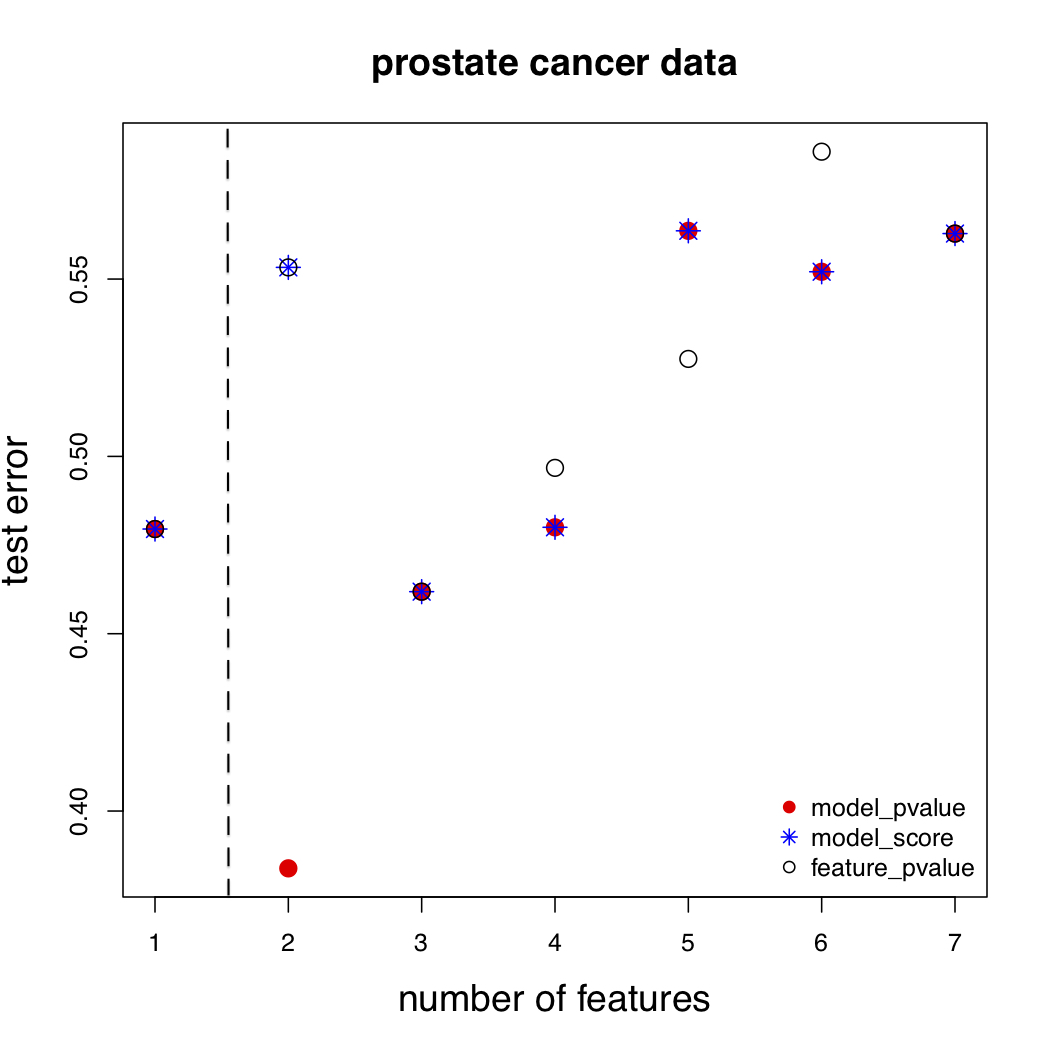}
\includegraphics[width=.32\textwidth, height = .35\textwidth]{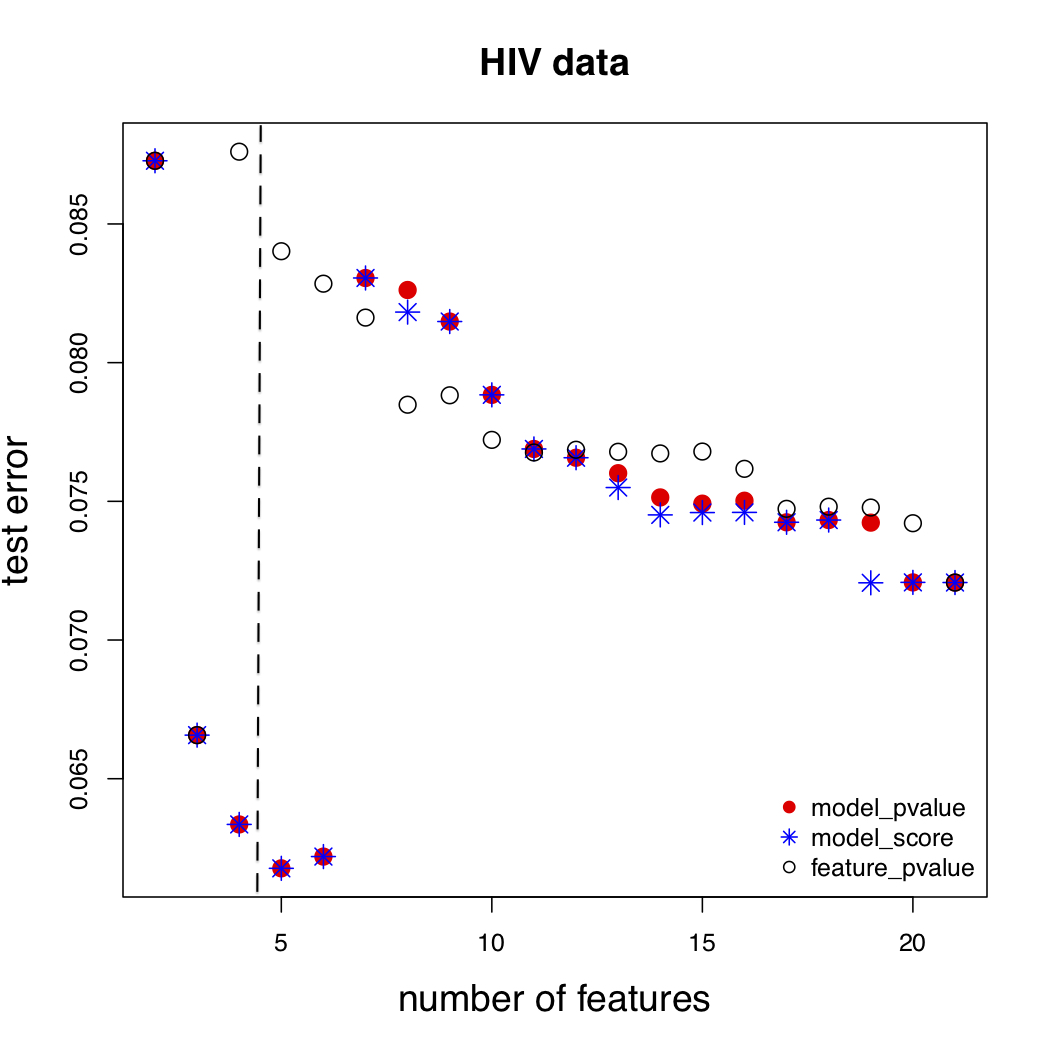}
\includegraphics[width=.32\textwidth, height = .35\textwidth]{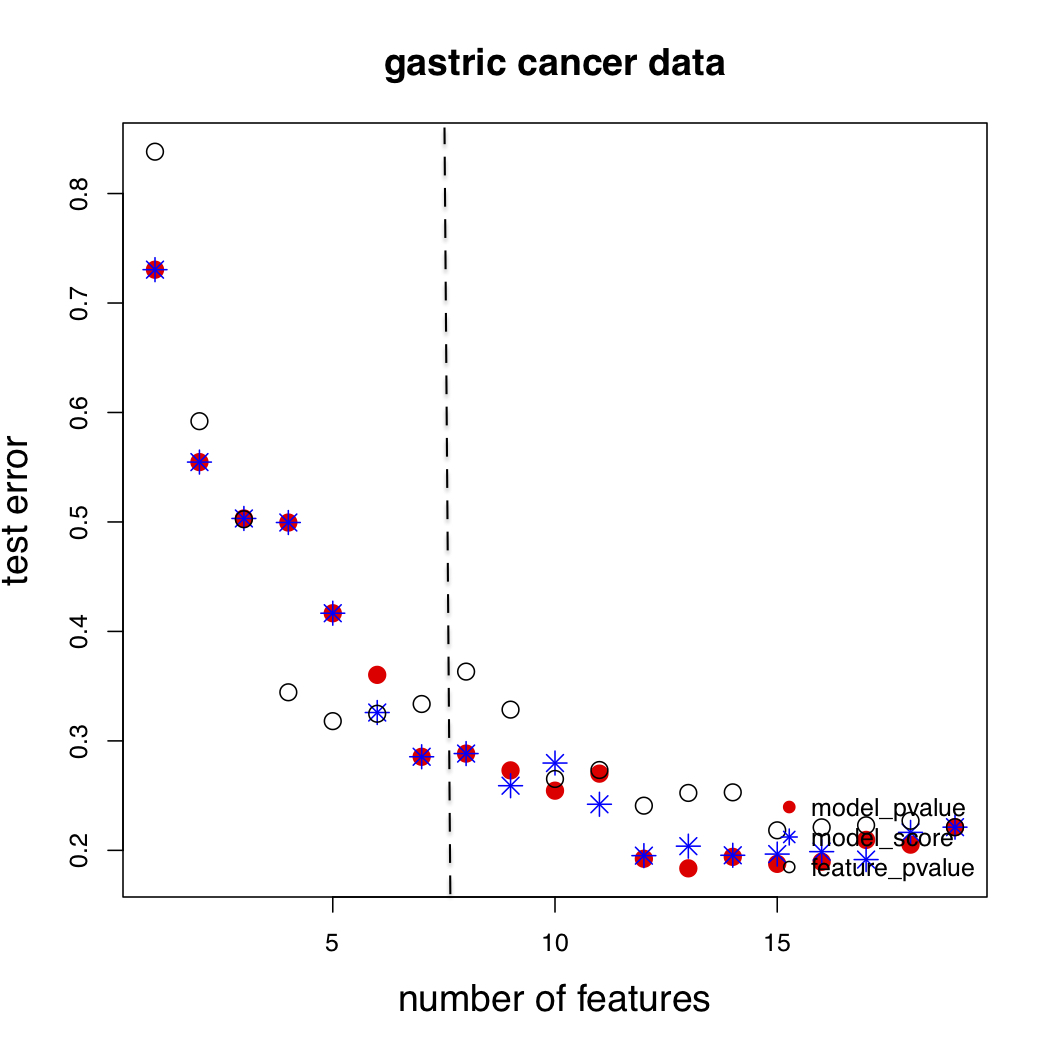}
\label{fig:performance}
\caption{Test errors for nested model sequences created based on model p-value, model score and feature p-value. The vertical dashed line is where we want to stop based on the mode p-values as described in section \ref{sec:realdata}.}
\end{center}
\end{figure}

\section{Discussion}
Our post-fitting procedure {\em Next-Door analysis}  gives insights into predictor indispensability and  offers nearby alternative models. Our proposal shifts the focus from coefficients to models:  having selected a model from the data,  we look for  alternative models that omit each predictor, and yet have validation error similar to the base model.  The model performance is considered marginalizing out the parameter turning and randomness in the training.  We present a  bootstrap approach based on the de-biased test error estimate for a pre-fixed hypothesis.  We also propose a simple concept called model score which takes into the hypothesis selection by considering its selection frequency.  By considering the hypothesis selection and model selection separately in this paper, we can easily deal with more complicated model selection and hypothesis selection events.

Next-Door analysis can also be used in cases where you want examine the removal of a set of predictors. In the case where the users have in mind which $k$ predictors they do not want to use after looking at the fitted model, we can simply remove those predictors and all analyses still carry through. In general, however, it is not practical to enumerate all different combinations of $k$ predictors. 

\vskip 14pt
\noindent {\large\bf Acknowledgements} The author would like to thank Professor Jonathan Taylor and  Professor Ryan Tibshirani for the helpful discussions.  The author would also like to thank Zhou Fan for his feedback on the paper. Robert Tibshirani was supported by NIH grant 5R01 EB001988-16 and NSF grant 19 DMS1208164.
\par

\appendix
\section{Simulation results for the Bootstrap p-values}
\label{app:bootstrap}
In this simulation, we examine the accuracy of the proposed bootstrap p-value and compare it to a naive bootstrap with the unadjusted errors. Each column of $X$ represents errors of a model with $n$ samples. We let $Q_j = \frac{\sum^n_{i=1} X_{i,j}}{n}$, construct the randomized error $Q^{\alpha}_j$,  and $Q^{\frac{1}{\alpha}}_j$ as described in section \ref{subsec:random}. Without loss of generality, we let $n = 100$.

Suppose that we have observations $X_{i,j} \sim \mathcal{N}(\mu_j, 1)$, $j = 1,2,\ldots, m$, $i = 1,2,\ldots, n$. Let $m = 5, 20$. For each $n$, we consider two cases for the underlying $\mu_j$: (1)$\mu_j = 0, \;\forall j = 1,2,\ldots m$, (2) $\mu_j  \sim \mathcal{N}(0, \frac{1}{n}) \;\forall j = 1,2,\ldots m$.   For each set of parameter, we repeat $10000$ times the following steps:
\begin{enumerate}
\item Construct  the de-biased estimate $\hQ^{\alpha}$ as described in  section \ref{subsec:random}, and the observed error estimate $\hQ  = \frac{1}{H}\sum^H_{h=1} Q_{k^*_h}$, where $k^*_h$ is the chosen index at $h^{th}$ round in the de-biased error estimate algorithm.
\item Bootstrap $B = 1000$ times, with or without mean rescaling. At each repetition $b$, let $\hQ_b$ and $\widehat{Q}^\alpha_b$ be the bootstrap version of the mean error and mean de-biased error, and the bootstrap differences are 
$$s_{1, b} = \hQ_b - \bar{Q},\;\; s_{2, b} = \hQ_b^\alpha- \bar{Q},$$
where $\bar{Q}$ are the Bootstrap population mean across repetitions.
\item Let  $\bar{\mu}$ be the true population mean marginalized over the given selection criterion. The p-values using the unadjusted error and the de-biased errors are given by:
$$p_{1, l} = \frac{\sum^B_{b=1} \mathbbm{1}_{\{\hQ -  \bar{\mu} \geq s_{1, b}\}}}{B}, \;\;p_{1, r} = \frac{\sum^B_{b=1} \mathbbm{1}_{\{\hQ -\bar{\mu} \leq s_{1, b}  \}}}{B}$$
$$p_{2, l} = \frac{\sum^B_{b=1} \mathbbm{1}_{\{\widehat{Q}^{\alpha}-  \bar{\mu}\geq s_{2, b}\}}}{B}, \;\;p_{2, r} = \frac{\sum^B_{b=1} \mathbbm{1}_{\{\widehat{Q}^{\alpha}-\bar{\mu}\leq s_{2, b}  \}}}{B}$$
Here, $p_{1, l}$ and $p_{2, l}$ are the probability of the bootstrap differences between the estimate and truth being smaller than the difference between our current estimate and the underlying truth, if $p_l < \alpha$, it means that the truth will be on the left to the confidence interval constructed for a given level $\alpha$; similarly, if $p_r< \alpha$, it means that the truth will be on the right of the confidence interval constructed for a given level $\alpha$.
\end{enumerate}
In this simulation, we know that the test statistics is not degenerate, so we let $\gamma_2 = 0$ in the Bootstrap algorithm and we know that the covariance is not degenerate, so we let $\gamma_1 = 0$ in the de-biased error estimate algorithm. In Figure \ref{fig:coverage}, the left and right halves  show  the empirical CDF plot of the four p-values after mean rescaling and without mean rescaling across four parameter settings.  
\begin{figure}
\centering
\includegraphics[height = .45\textwidth]{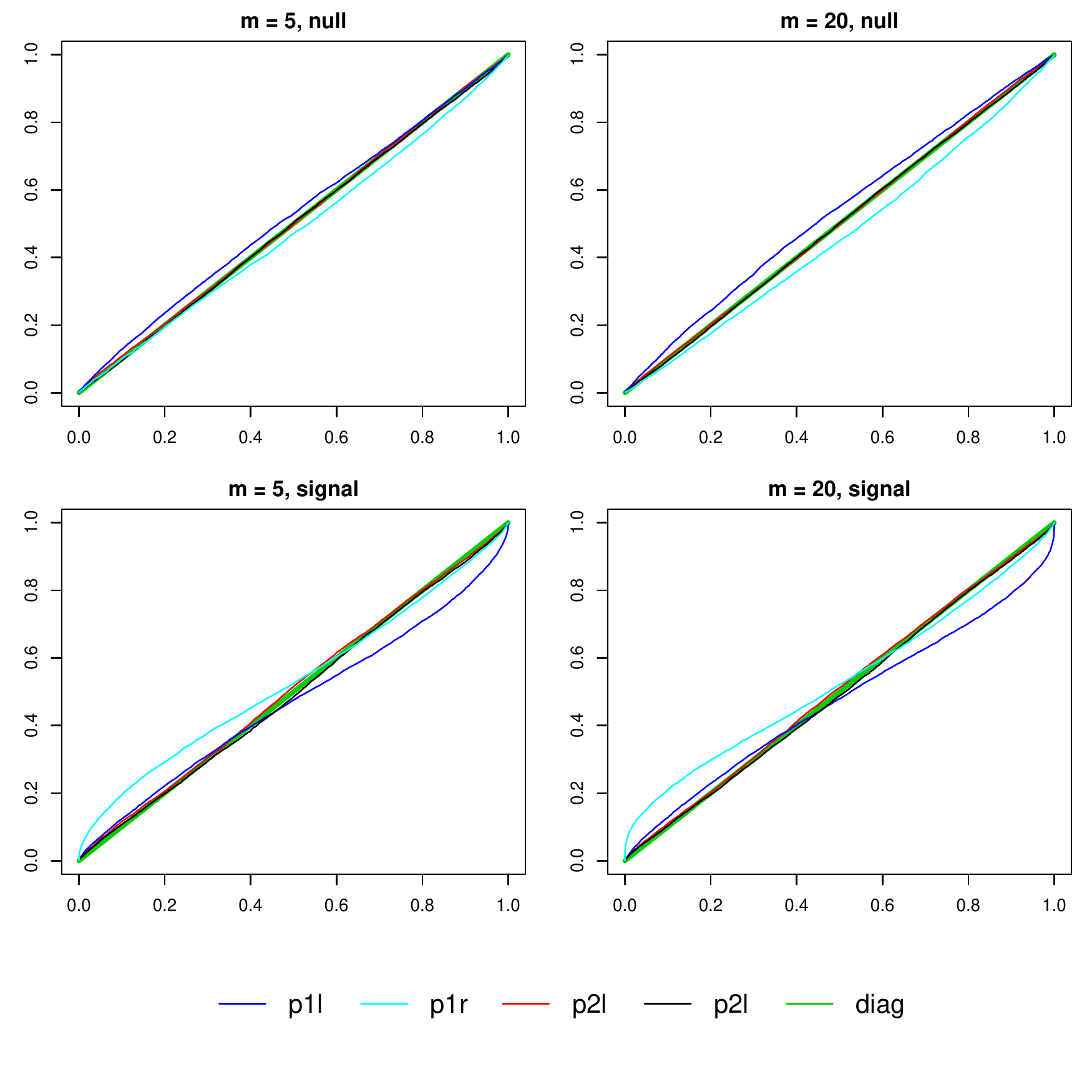}
\includegraphics[height = .45\textwidth]{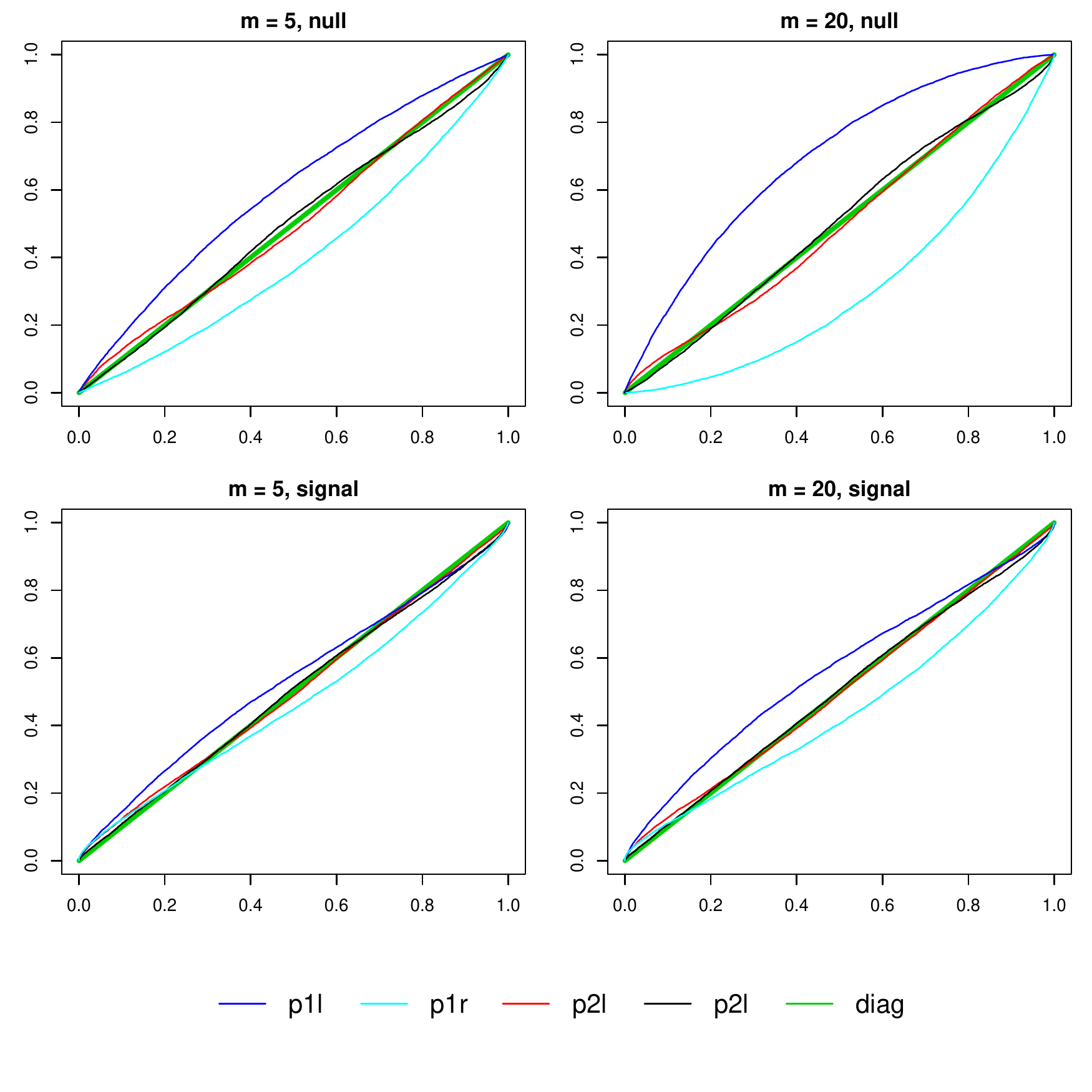}
\vskip -.2in
\caption{\em p-value distribution after 10000 repetitions. The left plot is the distribution of four p-values after variance rescaling and the right one shows the p-values distribution without variance rescaling. The light/dark blue curves show the c.d.f of the p-values testing for the observed value comes from a distribution with mean greater/smaller than the true mean using the unadjusted observation; the red/black curves show that with the de-biased estimate.}
\label{fig:coverage}
\end{figure}
The de-biased estimate Bootstrap has p-value distribution closer to uniform -- it has less dependence on the correct underlying distribution than the native Bootstrap procedure. Also, the mean rescaling approach leads to better p-value distribution.

\section{Proofs of  Theorem \ref{thm:randomized0}, \ref{thm:conservative} , Lemma \ref{lem:covariance}} 
\label{app:proof}
We will use  Proposition \ref{prop:normality}   to prove Theorem \ref{thm:randomized0} and  Lemma \ref{lem:covariance}. 
\begin{proposition}
\label{prop:normality} 
 Suppose the consistency, range, moment and dimension assumptions hold. For all $\lambda \in \Lambda$, let  $W(\bx_i, y_i) = (y_i- \bx^T_i\beta)^2$  and $\delta_{i} = (y_i- \bx^T_i\hat{\beta})^2 - W(\bx_i, y_i)$ where $\hbeta$ is the coefficients from  the model fitted with a training set of  size $n$ and $\beta$ is the coefficient vector it converges in the consistency assumption. Let $\{(\bx_i, y_i), \;i = 1,2,\ldots, n_2 \}$ be a test set of size $n_2\asymp n$, then there exists a constant large enough, such that,
\[
\frac{1}{\sqrt{n_2}}\sum^{n_2}_{i=1} (W(\bx_i, y_i) - \mu)\overset{d}{\rightarrow} N(0, \sigma^2), \;\;E[(\frac{1}{\sqrt{n_2}}\sum^{n_2}_{i=1} \delta_{i})^2]\rightarrow 0,\;\;E[\frac{1}{n_2}\sum^{n_2}_{i=1} \delta^2_{i}]\rightarrow 0
\]
with $\mu = E[(y - \bx^T\beta)^2]$ and $\sigma^2 = \Var((y - \bx^T\beta)^2)\leq C$  as $n, \;n_2\rightarrow \infty$.
\end{proposition}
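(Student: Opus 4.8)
The plan is to establish the three claims in Proposition \ref{prop:normality} by carefully separating the ``population'' fluctuation term $W(\bx_i,y_i) = (y_i - \bx_i^T\beta)^2$ from the ``estimation error'' term $\delta_i$, and controlling each separately. For the first claim, note that $\{(\bx_i,y_i)\}$ are i.i.d.\ and $W(\bx_i,y_i)$ depends only on the fixed limiting coefficient $\beta = \beta(\lambda)$, not on $\hbeta$; by the moment assumption, $\Var((y-\bx^T\beta)^2)\in[c,C]$, so a standard CLT (Lindeberg--L\'evy) applies directly to $\frac{1}{\sqrt{n_2}}\sum_{i=1}^{n_2}(W(\bx_i,y_i)-\mu)$, giving convergence to $N(0,\sigma^2)$ with $\sigma^2 = \Var((y-\bx^T\beta)^2)\le C$. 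The subtlety that the test set is separate from (and independent of) the training set used to form $\hbeta$ is what makes $W$ cleanly i.i.d., so this part is essentially immediate.

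For the second claim, $E[(\frac{1}{\sqrt{n_2}}\sum_{i=1}^{n_2}\delta_i)^2]\to 0$, I would condition on the training set, so that $\hbeta$ is fixed. Write $\delta_i = (y_i-\bx_i^T\hbeta)^2 - (y_i-\bx_i^T\beta)^2 = (\bx_i^T(\beta-\hbeta))(2y_i - \bx_i^T(\hbeta+\beta))$. Conditionally on the training data the $\delta_i$ are i.i.d.\ across the test index $i$, so $E[(\frac{1}{\sqrt{n_2}}\sum_i\delta_i)^2 \mid \text{train}] = E[\delta_1^2\mid\text{train}] + (n_2-1)(E[\delta_1\mid\text{train}])^2$. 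The first term is $O(\|\hbeta-\beta\|^2)$ after Cauchy--Schwarz using $E[\|\bx\|_2^2(y-\bx^T\beta)^2]\le C$, $E[\|\bx\|_2^4]<\infty$, and the consistency assumption's $n^{1/4}$ rate; the $(n_2-1)(E[\delta_1\mid\text{train}])^2$ term is the dangerous one --- it carries the factor $n_2\asymp n$, so we need $E[\delta_1\mid\text{train}] = o(n^{-1/2})$ in an $L^2$ sense, i.e.\ $n\,E[(E[\delta_1\mid\text{train}])^2]\to 0$. Since $E[\delta_1\mid\text{train}]$ is a quadratic-type expression in $\hbeta - \beta$, bounding its square and taking expectation over the training data brings in $E\|\hbeta-\beta\|_2^4$, and the consistency assumption is stated precisely as $n\,E\|\hbeta(\lambda)-\beta(\lambda)\|_2^4\to 0$ --- this is exactly the rate needed. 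So the argument is: expand, condition, use i.i.d.\ structure to split into variance and squared-mean pieces, bound the squared-mean piece by $E\|\hbeta-\beta\|_2^4$ times moment constants, and invoke the $n^{1/4}$ consistency rate. The linear-in-$y$ cross terms are handled by the moment bound $E[\|\bx\|_2^2(y-\bx^T\beta)^2]\le C$ together with $E\|\bx\|_2^4<\infty$ via Cauchy--Schwarz.

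For the third claim, $E[\frac{1}{n_2}\sum_{i=1}^{n_2}\delta_i^2]\to 0$, this is just $E[\delta_1^2]$, which by the same expansion and Cauchy--Schwarz is $O\big((E\|\hbeta-\beta\|_2^4)^{1/2}\big)$ or $O(E\|\hbeta-\beta\|_2^2)$ times finite moment constants, hence vanishes under consistency. The finiteness of the relevant moments --- $E[\|\bx\|_2^4]<\infty$, the bounded variances, and $E[\|\bx\|_2^2(y-\bx^T\beta)^2]\le C$ --- is exactly what the moment assumption supplies, and the range assumption $\lambda < Cn^{-1/4}$ plus the finite-dimension assumption ensure that the limiting $\beta(\lambda)$ and the constants can be taken uniform over the finitely many $\lambda\in\Lambda$.

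The main obstacle is the second claim's $(n_2-1)(E[\delta_1\mid\text{train}])^2$ term: a naive $L^1$ consistency rate ($E\|\hbeta-\beta\|_2^2\to 0$) would not suffice because of the amplifying factor $n_2\asymp n$, and this is precisely why the consistency assumption is stated at the fourth-moment, $n^{1/4}$-rate level rather than the more familiar root-$n$ or even just consistency level. Getting the bookkeeping right --- tracking that $E[\delta_1\mid\text{train}]$ is genuinely a second-order (quadratic) quantity in $\hbeta-\beta$ after the cross term $E[2(y-\bx^T\beta)\bx^T(\beta-\hbeta)\mid\text{train}]$ is combined with the population-optimality or simply bounded via moments, and that squaring it and taking expectation lands on $E\|\hbeta-\beta\|_2^4$ --- is the crux. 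Everything else is routine moment bookkeeping with Cauchy--Schwarz and the stated moment constants.
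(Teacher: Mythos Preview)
Your overall architecture---separate the CLT part from the estimation-error part, condition on the training set, and split $E[(\frac{1}{\sqrt{n_2}}\sum_i\delta_i)^2\mid\text{train}]$ into a variance piece and an $(n_2-1)(E[\delta_1\mid\text{train}])^2$ piece---matches the paper's spirit. The first and third claims are handled correctly.

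The gap is in the second claim, specifically in your treatment of $E[\delta_1\mid\text{train}]$. You assert that this is ``genuinely a second-order (quadratic) quantity in $\hbeta-\beta$ after the cross term $E[2(y-\bx^T\beta)\bx^T(\beta-\hbeta)\mid\text{train}]$ is combined with the population-optimality or simply bounded via moments.'' But $\beta=\beta(\lambda)$ is the \emph{lasso} population limit, not the unpenalized risk minimizer, so $E[(y-\bx^T\beta)\bx]$ is \emph{not} zero: the population first-order condition is the KKT system $|E[(y-\bx^T\beta)x_j]|\le\lambda$, not orthogonality. Hence the cross term is genuinely first order, of size $O(\lambda)\|\hbeta-\beta\|_2$, and ``simply bounded via moments'' gives only a constant coefficient, which after squaring and multiplying by $n_2\asymp n$ would require $n\,E\|\hbeta-\beta\|_2^2\to 0$---a full $\sqrt n$ rate that the consistency assumption does \emph{not} supply (it only gives $\sqrt n\,E\|\hbeta-\beta\|_2^2\to 0$).

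The paper closes this gap exactly here: it invokes the lasso KKT condition to get $|E[(y-\bx^T\beta)x_j]|\le\lambda+o(1)$, and then uses the range assumption $\lambda<Cn^{-1/4}$ to conclude $n^{1/4}|E[(y-\bx^T\beta)x_j]|\le C$. With that extra $n^{-1/4}$ in hand, the linear piece contributes $O(n^{-1/2})\|\hbeta-\beta\|_2^2$ after squaring, and multiplying by $n$ leaves $\sqrt n\,E\|\hbeta-\beta\|_2^2\to 0$, which does follow from $n\,E\|\hbeta-\beta\|_2^4\to 0$. So the range assumption is not there merely for uniformity over $\Lambda$ as you suggest; it is the mechanism that rescues the linear cross term. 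Without this KKT\,+\,range step, your argument for the second claim does not close.
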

\begin{proof}
The first statement is a direct application of CLT. For the second and the third statements, we divide $\frac{1}{\sqrt{n_2}}\sum^{n_2}_{i=1}\delta_{i}$ into and bound $\frac{1}{n_2}\sum^{n_2}_{i=1}\delta^2_{i}$ by two parts:
\begin{align*}
&\frac{1}{\sqrt{n_2}}\sum^{n_2}_{i=1}\delta_{i} = \frac{1}{\sqrt{n_2}}\sum^{n_2}_{i=1}2(y_i-\bx^T_i\beta)\bx^T_i(\beta-\hat{\beta})+ \frac{1}{\sqrt{n_2}}\sum^{n_2}_{i=1}[\bx^T_i(\beta-\hat{\beta})]^2\\
&\frac{1}{n_2}\sum^{n_2}_{i=1}\delta^2_{i} \leq \frac{1}{n_2}\sum^{n_2}_{i=1}8[(y_i-\bx^T_i\beta)\bx^T_i(\hat{\beta}- \beta)]^2+ \frac{1}{n_2}\sum^{n_2}_{i=1}2[\bx^T_i(\beta-\hat{\beta})]^4
\end{align*}
\begin{itemize}
\item By Cauchy-Schwarz inequality, we have $E[(\frac{1}{\sqrt{n_2}}\sum^{n_2}_{i=1}(y_i-\bx^T_i\beta)\bx^T_i(\hat{\beta}- \beta))^2]\leq E[n^{-1/2}_2\|\sum^{n_2}_{i=1}(y_i-\bx^T_i\beta)\bx^T_i\|^2_2]E[\sqrt{n_2}\|\beta-\hat{\beta}\|^2_2]$($\hat{\beta}$ and $(y_i, \bx_i)$ are independent).  The second half $E[\sqrt{n_2}\|\beta-\hat{\beta}\|^2_2]\rightarrow 0$ by the consistency assumption. For the first term, we have
\begin{align*}
E[n^{-1/2}_2(\sum^{n_2}_{i=1}(y_i-\bx^T_i\beta)x_{i,j})^2] &= E[n^{-1/2}_2(\sum^{n_2}_{i=1}((y_i-\bx^T_i\beta)x_{i,j} - E[(y-\bx^T\beta)x_j])+n_2E[(y-\bx^T\beta)x_j])^2]\\
& = \Var((y_i-\bx^T_i\beta)x_{i,j}) + n^{1/2}_2(E[(y-\bx^T\beta)x_j])^2
\end{align*}
$\Var((y_i-\bx^T_i\beta)x_{i,j})$ is bounded because $E[(y-\bx^T\beta)^2\|\bx\|_2^2]$ is bounded. Now, we show that, when the range assumption holds, $n^{\frac{1}{4}}|E[(y-\bx^T\beta)x_j]| \leq C$ for a large enough constant $C$. Imagine now we have another training data set $(\tilde{\bold{x}}_i, \ty_i)$, with  sample size $N$, and let $\tbeta$ be the coefficient trained using this dataset. By the KKT condition, we have $|\frac{1}{N}\sum_{i=1}\tx_{i,j}(\ty_i - \tbx_i^T\tbeta)| \leq \lambda$, hence, the following is true
\begin{align*}
|\frac{1}{N}\sum_{i=1}\tx_{i,j}(\ty_i - \tbx_i^T\beta)| \leq |\frac{1}{N}\sum_{i=1}\tx_{i,j} \tbx_i^T(\tbeta -\beta)|+\lambda \leq \lambda+  \|\frac{1}{N}\sum_{i=1}\tx_{i,j}\tbx_i^T\|_2\|\tbeta -\beta\|_2
\end{align*}
We know that $N$ can be arbitrarily large and grow to $\infty$ faster than $n$. By LLN, for any fixed $n$, we let $N\rightarrow\infty$ $\lim_{N\rightarrow \infty}(\frac{n^{\frac{1}{4}}}{N}\sum_{i=1}\tx_{i,j}(\ty_i - \tbx_i^T\beta) - n^{\frac{1}{4}}E[(y-\bx^T\beta)x_j]) =  0$ and $ \|\frac{1}{N}\sum^N_{i=1}\tx_{i,j}\tbx_i^T\|_2$ converges to a finite constant in probability.  By the consistency assumption,  we have $N^{\frac{1}{4}}\|\tbeta -\beta\|_2\overset{p}{\rightarrow} 0$. By the range assumption, for a large enough constant $C$, we have 
\[
\lim_{n\rightarrow\infty} n^{\frac{1}{4}}|E[(y-\bx^T\beta)x_j]|  \leq \lim_{n\rightarrow\infty} \lim_{N\rightarrow\infty}(\lambda n^{\frac{1}{4}}+  \|\frac{1}{N}\sum^N_{i=1}\tx_{i,j}\tbx_i^T\|_2\|\tbeta -\beta\|_2n^{\frac{1}{4}})\leq C
\]
 Hence, we have $E[(\frac{1}{\sqrt{n_2}}\sum^{n_2}_{i=1}(y_i-\bx^T_i\beta)\bx^T_i(\hat{\beta}- \beta))^2]\rightarrow 0$.  
\item By LLN, the moment assumption and consistency assumption, we have $\frac{1}{n_2}E[\sum^{n_2}_{i=1}((y_i-\bx^T_i\beta)\bx^T_i(\hat{\beta} - \beta))^2] \leq  E[\frac{1}{n_2}\sum^{n_2}_{i=1}\|(y_i-\bx^T_i\beta)\bx_i\|_2^2]E[\|\hat{\beta} - \beta\|_2^2]=E[(y-\bx^T\beta)^2\|\bx\|_2^2]E[\|\hat{\beta} - \beta\|_2^2]\rightarrow 0$. 
\item By Cauchy-Schwarz inequality and LLN again, we see that the second terms in both expressions go to 0:
\begin{align*}
&E[\frac{1}{n_2}\sum^{n_2}_{i=1}(\bx^T_i(\beta-\hat{\beta}))^4] \leq  E[\|\bx\|_2^4]E[\|\beta-\hat{\beta}\|^4_2] \rightarrow 0\\
&E[(\frac{1}{\sqrt{n_2}}\sum^{n_2}_{i=1}[\bx^T_i(\beta-\hat{\beta})]^2)^2]\leq E[\sum^{n_2}_{i=1}(\bx^T_i(\beta-\hat{\beta}))^4=E[\|\bx\|_2^4](n_2E[\|\beta-\hat{\beta}\|^4_2])\rightarrow 0, 
\end{align*}
\end{itemize}
Hence, we have  $E[(\frac{1}{\sqrt{n_2}}\sum^{n_2}_{i=1} \delta_{i})^2]\rightarrow 0,\;\;E[\frac{1}{n_2}\sum^{n_2}_{i=1} \delta^2_{i}]\rightarrow 0$.  
\end{proof}
The above results also hold if we exclude predictor $j$. Let $W_k(\bx_i, y_i)= (y_i- \bx^T_i\beta(\lambda_k))^2$ and $W_{m+k}(\bx_i, y_i) =  (y_i- \bx^T_i\beta(\lambda_k; j))^2$ for $k = 1,2,\ldots, m$. Let $\delta_{k,i} = Q_{k}(\bx_i, y_i) -W_{k}(\bx_i, y_i)$ and $W_k, \delta_k$ be their mean over samples $i$ for $k=1,\ldots, 2m$. Let $\mu_k$ be the mean  of $W_k(\bx_i, y_i)$. As a direct result of Proposition \ref{prop:normality},  let $Q^v_k$ be the mean validation error at penalty $k$ and fold $v$, for a constant $C$ large enough, we have 
\begin{align}
\label{eq:eq1}
&\Var(Q_k(\bx_i, y_i)) \leq 2\Var(W_k(\bx_i, y_i))+E[\delta_{k,i}^2] \leq C,\;\;|E[Q_k] - E[W_k(\bx, y)]| \rightarrow 0\nonumber\\
&\Var(Q^v_k) \leq 2\Var(W_k(\bx, y))+2E[(\frac{V}{\sqrt{n}}\sum^{n}_{i=1}\delta_{k,i})^2]  \leq C
\end{align}
\noindent\textbf{Proof of Lemma \ref{lem:covariance}: } Let $\Sigma_{k,k'} =\Cov(W_k(\bx, y), W_{k'}(\bx, y))$. We divide the covariance estimate into three parts:
\begin{align*}
\hat{\Sigma}_{k, k'} &= \frac{\sum^n_{i=1}(Q_k(\bx_i , y_i) - Q_k)(Q_{k'}(\bx_i , y_i) - Q_{k'}) }{n}\\
& = \frac{\sum^n_{i=1}(W_k(\bx_i , y_i) - W_k)(W_{k'}(\bx_i , y_i) - W_{k'})}{n}+\frac{\sum^n_{i=1}(\delta_{i,k}-\delta_k)(\delta_{i,k'}-\delta_{k'})}{n} \\
&+ \frac{\sum^n_{i=1}(\delta_{i,k}-\delta_k)(W_k'(\bx_i , y_i) - W_k')+\sum^n_{i=1}(\delta_{i,k'}-\delta_{k'})(W_k(\bx_i , y_i) - W_k)}{n}
\end{align*}
\begin{itemize}
\item  The term  $|\frac{\sum^n_{i=1}(W_k(\bx_i , y_i) - W_k)(W_{k'}(\bx_i , y_i) - W_{k'})}{n} - \Sigma_{k,k'}|\overset{p}{\rightarrow} 0$ by LLN.
\item By Slutsky's theorem, Proposition \ref{prop:normality} and use the fact that $W_{k}(\bx_i, y_i)$ has finite variance, we have
\begin{align*}
&|\frac{\sum^n_{i=1}(\delta_{i,k}-\delta_k)(W_{k'}(\bx_i , y_i) - W_{k'})+\sum^n_{i=1}(\delta_{i,k'}-\delta_{k'})(W_k(\bx_i , y_i) - W_k)}{n}|\\
\leq & \sqrt{\frac{\sum^n_{i=1}(W_{k'}(\bx_i , y_i) - W_{k'})^2}{n} \frac{\sum^n_{i=1}(\delta_{k,i} - \delta_k)^2}{n}} + \sqrt{\frac{\sum^n_{i=1}(W_{k}(\bx_i , y_i) - W_{k})^2}{n} \frac{\sum^n_{i=1}(\delta_{k,i} - \delta_k)^2}{n}}\overset{p}{\rightarrow} 0
\end{align*}
and $|\frac{\sum^n_{i=1}(\delta_{i,k}-\delta_k)(\delta_{i,k'}-\delta_{k'})}{n}| \leq  \sqrt{\frac{\sum^n_{i=1} (\delta_{k,i} - \delta_k)^2}{n} \frac{\sum^n_{i=1}(\delta_{k',i} - \delta_{k'})^2}{n}}\overset{p}{\rightarrow}  0$.
\end{itemize}
As a result, we have $\|\hat{\Sigma} - \Sigma\|_\infty \overset{p}{\rightarrow} 0$. Now we show that that  $E[\sum_{k} \hat{\Sigma}_{k,k}] < C$ for a large constant $C$.  Let  $\hat{\beta}^v(\lambda_k)$ be the coefficient trained for predicting fold $k$, we have
\begin{align*}
E[\hat{\Sigma}_{k,k}]= \frac{1}{V}\sum^V_{v=1} E[\frac{\sum_{i\in \mathcal{V}_v} (Q_k(\bx_i, y_i) - Q^v_k+Q^v_k-Q_k)^2}{n/V}] \leq \frac{2}{V}\left(\sum^V_{v=1}\Var((y-\bx\hbeta^v(\lambda_k))^2)+\sum^V_{v=1}\Var(Q^v_k)\right)
\end{align*}
Bothe the first term  and the second terms are bounded by equation (\ref{eq:eq1}). We thus prove that  $E[\sum_{k}\hat{\Sigma}_{k,k}]$ is bounded.

\textbf{Proof of Lemma \ref{lem:expectation}:} Let $\hbeta^v(\lambda_{k^*})$ be the coefficients trained for fold $v$ at penalty $\lambda_{k^*}$. By definition $\sqrt{n}(\Err^R - \sum^m_{k^*=1} \Err_{k^*} P(O_{k^*})) =\sqrt{n}E[\sum^m_{k^*=1}(\frac{1}{V}\sum^V_{v=1}(y - \bx^T\hbeta^v(\lambda_{k^*}))^2-\Err_{k^*}) \mathbbm{1}_{O_{k^*}}]$. Because both $V$ and $m$ are finite, we only need to show that for each $k^*$ and $v$, we have $|\sqrt{n}E[((y - \bx^T\hbeta^v(\lambda_{k^*}))^2-\Err_{k^*}) \mathbbm{1}_{O_{k^*}}]|\rightarrow 0$.  Let $(\tbx_i, \ty_i)$ for $i = 1,2,\ldots, n$ be $n$ new realizations, we know that
\begin{align*}
|\sqrt{n}E[((y - \bx^T\hbeta^v(\lambda_{k^*}))^2-\Err_{k^*}) \mathbbm{1}_{O_{k^*}}]| & =|E[\frac{1}{\sqrt{n}}\sum^n_{i=1}((\ty_i-\tbx_i^T\hbeta^v(\lambda_{k^*}))^2 - W_{k^*}(\tbx_i - \ty_i))\mathbbm{1}_{O_{k^*}}]|
\end{align*}
Let $I_{k^*} := \frac{1}{\sqrt{n}}\sum^n_{i=1}((\ty_i-\tbx_i^T\hbeta^v(\lambda_{k^*}))^2 - W_{k^*}(\tbx_i - \ty_i))$. By Proposition \ref{prop:normality} ,  we know that $E[I^2_{k^*}] \rightarrow 0$. By the Cauchy-Schwarz inequality, $|\sqrt{n}E[((y - \bx^T\hbeta^v(\lambda_{k^*}))^2-\Err_{k^*}) \mathbbm{1}_{O_{k^*}}]|  \leq E[I^2]P(O_{k^*})\rightarrow 0$, and equivalently, $|\sqrt{n}(\Err^R - \sum^m_{k^*=1} \Err_{k^*} P(O_{k^*})) | \rightarrow 0$. Following exactly the same argument, we have $|\sqrt{n}(\Err^{j,R}- \sum^m_{k^*=1} \Err_{m+k^*} P(O_{k^*})) |\rightarrow 0$.
\subsection*{Proof of Theorem \ref{thm:randomized0}}
Besides Proposition \ref{prop:normality}, we will also use Proposition \ref{prop:prop2} below.
\begin{proposition}
\label{prop:prop2}(\cite{guan2018test})
Let m be a fixed number and $\{x_n\}$, $z$ be m dimensional vectors such that $z \sim N(0, \Sigma)$ and $x_n\overset{D}{\rightarrow} z$, and $E[\|x_n\|^2_2]$ is asymptotically bounded.  For a sequence of bounded function $g_n(.)$ that is almost everywhere differentiable with bounded first derivative under both the measures of $z$ and $x_n$ asymptotically, we have
\[
\lim_{n\rightarrow \infty}\|E[x_ng_n(x_n)] - E[zg_n(z)]\|_\infty = 0
\]
\end{proposition}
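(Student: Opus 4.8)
The plan is to prove the statement coordinate by coordinate. Since $m$ is fixed and finite, it suffices to show that for each $\ell \in \{1, \ldots, m\}$ the scalar quantity $E[x_{n,\ell} g_n(x_n)] - E[z_\ell g_n(z)]$ tends to $0$, and then take the maximum over the $m$ coordinates to recover the $\|\cdot\|_\infty$ conclusion. Fix $\ell$ and write $h_n(v) := v_\ell\, g_n(v)$, so the target becomes $E[h_n(x_n)] - E[h_n(z)] \rightarrow 0$. Two features block a naive appeal to weak convergence: first, the test function $h_n$ depends on $n$, so the definition of convergence in distribution (which concerns a single fixed bounded continuous function) does not apply directly; second, the factor $v_\ell$ makes $h_n$ unbounded even though each $g_n$ is bounded. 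My strategy is to neutralize the second obstruction by truncation and the first by the bounded-Lipschitz metrization of weak convergence, gluing the pieces with a three-$\varepsilon$ argument.

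Let $M := \sup_n \|g_n\|_\infty$ and let $L$ be the uniform (asymptotic) bound on $\|\nabla g_n\|$, which makes $\{g_n\}$ equi-Lipschitz with constant $L$. For a truncation level $R>0$ set $\tau_R(t) := \sign(t)\min(|t|, R)$ and define the truncated test function $h_n^R(v) := \tau_R(v_\ell)\, g_n(v)$. Then $\|h_n^R\|_\infty \le RM$, and $h_n^R$ is Lipschitz since $\nabla h_n^R = \tau_R'(v_\ell)\,g_n(v)\,e_\ell + \tau_R(v_\ell)\,\nabla g_n(v)$ has norm at most $M + RL$. Hence, for each fixed $R$, the family $\{h_n^R\}_n$ has bounded-Lipschitz norm $\|h_n^R\|_{\mathrm{BL}} = \|h_n^R\|_\infty + \mathrm{Lip}(h_n^R) \le RM + M + RL =: B_R$, uniformly in $n$. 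By Dudley's theorem the bounded-Lipschitz metric $d_{\mathrm{BL}}$ metrizes weak convergence on $\R^m$, so $x_n \overset{D}{\rightarrow} z$ gives $d_{\mathrm{BL}}(\mathrm{law}(x_n), \mathrm{law}(z)) \rightarrow 0$, whence $|E[h_n^R(x_n)] - E[h_n^R(z)]| \le B_R\, d_{\mathrm{BL}}(\mathrm{law}(x_n), \mathrm{law}(z)) \rightarrow 0$ for every fixed $R$. This is the step that absorbs the $n$-dependence of $g_n$.

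It remains to control the truncation error uniformly in $n$. Since $|h_n(v) - h_n^R(v)| = |v_\ell - \tau_R(v_\ell)|\,|g_n(v)| \le M\,(|v_\ell| - R)_+ \le M\, v_\ell^2 / R$, the asymptotic second-moment bound, which gives $E[x_{n,\ell}^2] \le E[\|x_n\|_2^2] \le C$ for all large $n$, yields $|E[h_n(x_n)] - E[h_n^R(x_n)]| \le M\,E[x_{n,\ell}^2]/R \le MC/R$, and likewise $|E[h_n(z)] - E[h_n^R(z)]| \le M\,\Sigma_{\ell\ell}/R$. Given $\varepsilon > 0$, I would first choose $R$ so large that both of these bounds are below $\varepsilon/3$; with $R$ now fixed, choose $n$ large enough that the bounded piece $|E[h_n^R(x_n)] - E[h_n^R(z)]|$ from the previous paragraph is below $\varepsilon/3$. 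The triangle inequality then gives $|E[h_n(x_n)] - E[h_n(z)]| < \varepsilon$, completing the coordinate-wise claim and hence the proposition.

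The main obstacle is precisely the simultaneous presence of the two difficulties: the $n$-varying test function rules out a one-line weak-convergence argument, while the unbounded multiplier $v_\ell$ rules out the portmanteau statement for bounded continuous functions. What unlocks the proof is that the hypotheses supply exactly the two ingredients needed to separate these difficulties—uniform boundedness together with equi-Lipschitz continuity of $\{g_n\}$ (so that the truncated pieces have uniformly bounded $\mathrm{BL}$-norm and weak convergence applies to them uniformly in $n$), and a uniform second-moment bound (so that truncation is harmless uniformly in $n$). One point deserving care is that the assumption of a bounded first derivative almost everywhere must be used in its Lipschitz form; I would note explicitly that this bound is invoked only to certify equi-Lipschitz continuity, which is all the $\mathrm{BL}$-estimate requires.
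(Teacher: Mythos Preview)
The paper does not prove this proposition; it is quoted from \cite{guan2018test} and used as a black box in the proof of Theorem~\ref{thm:randomized0}. So there is no in-paper proof to compare against, but your argument can still be assessed on its own terms, and it contains a genuine gap.

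The gap is the inference from ``almost everywhere differentiable with bounded first derivative'' to ``equi-Lipschitz with constant $L$''. This implication is false in general: the Cantor function is a.e.\ differentiable with derivative zero yet is not constant. More to the point, look at how the proposition is actually invoked in the paper: in the proof of Theorem~\ref{thm:randomized0} the functions $g_n$ are the indicators $\mathbbm{1}_{O_{k^*}}$ of polyhedral selection regions. Such indicators are differentiable with derivative $0$ off a boundary that is a finite union of hyperplanes, and that boundary has $z$-measure zero because $z$ has a Gaussian density---this is precisely what the phrase ``almost everywhere differentiable with bounded first derivative under both the measures of $z$ and $x_n$'' is encoding. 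But these indicators are discontinuous, hence certainly not Lipschitz, so your bounded-Lipschitz norm estimate $\|h_n^R\|_{\mathrm{BL}}\le B_R$ fails exactly in the case the paper needs. You flag this step as ``deserving care'' at the end, but it is not a matter of care; the step is simply invalid under the stated hypothesis.

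Your truncation-plus-second-moment handling of the unbounded multiplier $v_\ell$ is correct and standard. What needs to be replaced is the treatment of the bounded piece $E[h_n^R(x_n)]-E[h_n^R(z)]$. The hypothesis is really a statement about the discontinuity set of $g_n$ being null, not about a Lipschitz constant. A workable route is: (i) mollify $g_n$ by convolving with a smooth kernel at bandwidth $\eta$, producing genuinely Lipschitz $g_n^\eta$ to which your $d_{\mathrm{BL}}$ argument applies; (ii) control $E[|g_n(x_n)-g_n^\eta(x_n)|]$ and $E[|g_n(z)-g_n^\eta(z)|]$ by the probability that $x_n$ (respectively $z$) lies within $\eta$ of the discontinuity set, which is small uniformly in $n$ by the ``a.e.\ under both measures'' assumption; (iii) send $\eta\to 0$ after $n\to\infty$. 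Alternatively, a Skorokhod-representation argument combined with a portmanteau-type control on $P$-continuity sets works. Either way, the argument has to be organized around null discontinuity sets rather than Lipschitz bounds.
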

We apply Proposition \ref{prop:normality} to the cross-validation error curve, we have $\sqrt{n}(Q - \mu) \overset{d}{\rightarrow} N(0, \Sigma)$, where $\mu = (\mu_1,\mu_2,\ldots, \mu_{2m})$ and $\Sigma$ is bounded but potentially not invertible. By Proposition \ref{prop:normality}, we know $\sqrt{n}\|\Err - \mu\|_\infty \rightarrow 0$. As a consequence,  $\widetilde{Q}^{\alpha}$ and $\widetilde{Q}^{\frac{1}{\alpha}}$ are asymptotically independent with invertible covariance matrix when $\|\widehat{\Sigma} - \Sigma\|_{\infty} \overset{p}{\rightarrow} 0$:
\[
\sqrt{n}(\left(\begin{array}{l}\widetilde{Q}^{\alpha} \\\widetilde{Q}^{\frac{1}{\alpha}} \end{array}\right) - \left(\begin{array}{l}{\rm Err}\\{\rm Err} \end{array}\right)) \overset{d}{\rightarrow} \mathcal{N}(0, \left(\begin{array}{l l} (1+\alpha)(\Sigma+\sigma^2_0 I)& 0\\ 0&(1+\frac{1}{\alpha})(\Sigma+\sigma^2_0 I) \end{array}\right))
\]
Let $Z^{\alpha}$ and $Z^{\frac{1}{\alpha}}$ be the normal vectors from the limiting distribution corresponding to $\widetilde{Q}^{\alpha}$ and  $\widetilde{Q}^{\frac{1}{\alpha}}$.  The asymptotic independence guarantees that any selection using $\widetilde{Q}^{\alpha}$ has diminishing effect in $\widetilde{Q}^{\frac{1}{\alpha}}$:
\begin{align*}
\sqrt{n}E[\widehat{\Err} -\sum^m_{k^*=1} \Err_{k^*}\mathbbm{1}_{O_{k^*}}] &=E[\sum^m_{k^*=1}\sqrt{n}(\widetilde{Q}^{\frac{1}{\alpha}}_{k^*}(\epsilon, z)-\Err_{k^*})\mathbbm{1}_{O_{k^*}}] 
\end{align*}
In our case the vector $\tQ^{\alpha}-\Err$ and $\tQ^{\frac{1}{\alpha}}-\Err$ are both square integrable by equation (\ref{eq:eq1}) in Proposition \ref{prop:normality} and Lemma \ref{lem:covariance}:
\begin{align*}
&E[\|\sqrt{n}(\left(\begin{array}{l}\widetilde{Q}^{\alpha} \\\widetilde{Q}^{\frac{1}{\alpha}} \end{array}\right) - \left(\begin{array}{l}{\rm Err}\\{\rm Err} \end{array}\right))\|_2^2] \\
= &\Var(\sqrt{n}Q)+E[E[\|\epsilon\|_2^2|\hat{\Sigma}]]+(\alpha+\frac{1}{\alpha})E[E[\|z\|_2^2|\hat{\Sigma}]] \\
= &\Var(\sqrt{n}Q)+(1+\alpha+\frac{1}{\alpha})m\gamma_1\sigma^2_0+(\alpha+\frac{1}{\alpha})E[\sum^{2m}_{k=1}\hat{\Sigma}_{k,k}] < \infty
\end{align*}
Because $Z^{\alpha}$ has invertible covariance matrix and for each $\mathbbm{1}_{O_{k^*}}$, it is almost everywhere differentiable with first derivative being 0 under both the measures of $Z^{\alpha}$ and $\tQ^{\alpha}$.  Based on Proposition \ref{prop:prop2}, and the independence between  $Z^{\alpha}$ and $Z^{\frac{1}{\alpha}}$ , we have 
\[
\lim_{n\rightarrow\infty}E[\sum^m_{k^*=1}\sqrt{n}(\widetilde{Q}^{\frac{1}{\alpha}}_{k^*}(\epsilon, z)-\Err_{k^*})\mathbbm{1}_{O_{k^*}}] =\lim_{n\rightarrow\infty} E[\sum^m_{k^*=1}Z^{\frac{1}{\alpha}}_{k^*}(\epsilon, z)\mathbbm{1}_{\{Z^{\alpha}_{k^*}+\sqrt{n}\Err_{k^*} < Z^{\alpha}_{k}+\sqrt{n}\Err_{k} , \;\forall k\neq k^*\} } ] = 0
\]
Finally, we apply Lemma \ref{lem:expectation}, we have $\sqrt{n}(E[\widehat{\Err}]-\Err^{R}) \rightarrow 0$. Similarly, we have $\sqrt{n}(E[\widehat{\Err}^{j}]-\Err^{j, R}) \rightarrow 0$.
\section{A post selection inference approach conditional on the selected penalty}
\label{sec:postSelection}
While the Bootstrap p-value described in Section \ref{sec:test} tests for a quantity marginalizing out all randomness, the post selection inference approach conditional on the selected penalty $\lambda_{k^*}$. It corresponds to the case where in practice, we will fix the penalty selected from now on. The testing problem is then
\[
H_0: \Err^{j}_{k^*} \leq \Err_{k^*}   \;\;\;\;vs. \;\;\;\;H_1:\Err^{j}_{k^*}> \Err_{k^*} \nonumber ;\;\;\;\;\;\;\;\;\;\;\;\;\;\;\;\;\;\;\;\;\;\;\;\;\;\;\;\;\;\;\;\; (G_2)
\]

 In \cite{markovic2017adaptive}, let $\lambda_{k^*}$ be the selected penalty, the author have conditioned on  (1)the feature set $E$ is selected with penalty $\lambda_{k^*}$  and all predictors. (2)$\lambda_{k^*}$ is the penalty which minimizes the randomized CV curve, defined as
\[
\widetilde{Q}_k = Q_k+\frac{\epsilon_k}{\sqrt{n}}, \;\;\forall k = 1,2,\ldots, m
\]
where $\epsilon_k\sim \mathcal{N}(0, \tau^2)$ with $\tau^2$ being a constant. For all $\forall j\in \E$, the author then compare the performance between models fitted using OLS with feature set $E$ and $E\setminus j$, trained with all data.  In this section, we modify their procedure in the following three aspects: (1)For the model excluding the predictor $j$, we train it with all predictors except for $j$ instead of restricting ourselves to $E\setminus j$. (2)We do not run OLS with all data because we only care about the out of sample performance for  models produced. (3)We  neglect the selection event $A_2$ that $j\in S_{k^*}$ because we want to show that the conditional on the first event $A_1$ only will lead to  loss of power.

We look at  the test statistics $T =Q^j_{k^*}-Q_{k^*}+\frac{\epsilon}{\sqrt{n}}$, where $\epsilon \sim \mathcal{N}(0, \tau^2)$.  Let $\widetilde{Q} = (\widetilde{Q}_1, \widetilde{Q}_2, \ldots, \widetilde{Q}_m)^T$ . The event $A_1$ can be characterized by $H_{A_1} =\{\widetilde{Q}\in R^{m}: B_{Q}\widetilde{Q} \leq 0\}$, where $B_{Q}$ is the $m\times m$ matrix with a  1 and $-1$ at entry $(k, k)$ and $(k^*, k)$ for $k \neq k^*$, and $0$ at other entries. Let the $\Sigma$ be the covariance structure of the vector $\sqrt{n}(T, D, \tQ^T)^T$. We use $\Sigma_{TT}$ for $n\Cov(T, T)$ and $\Sigma_{T\tQ}$ for $n\Cov(T, \tQ)$, etc.

 Let  $\widetilde{Q} = \alpha_{\widetilde{Q} }T + N_{ \widetilde{Q}}$  where $\alpha_{\widetilde{Q} }:=\Sigma_{ \widetilde{Q}T} \Sigma^{-1}_{TT}$. The intuition is that if the three variables $\sqrt{n}T$,  and $\sqrt{n}\tQ$ are jointly asymptotically normal, then $N_{\tQ}$ is asymptotically independent of $T$. We can then condition on $N_{\tQ}$ and write the constraints in terms of $T$'s asymptotic behavior and achieve an asymptotic guarantee for the type $I$ error control.
 \begin{proposition}(\cite{markovic2017adaptive}, Theorem 1)\label{prop:post}Let $T$ be the test statistics.  If the following two assumptions hold

(1)The selection event $A$ can be characterized in terms of affine constraints over some data vector $D\in S_D =  \{D'|BD' \leq b\}$.

(2) The asymptotic joint normality of $(T, D)$  with invertible covariance matrix holds pre-selection
\[
\left(\left(\begin{array}{l} T\\ D\end{array}\right)\right)  \overset{d}{\rightarrow}  \mathcal{N}\left(  \left(\begin{array}{l} \theta \\ \bold{\gamma} \end{array}\right), \left(\begin{array}{l l l} \bSigma_{TT}& \bSigma_{TD}  \\ \bSigma_{DT}& \bSigma_{DD} \end{array}\right)\right)
\] 
Let $D = \Sigma_{D, T}\Sigma_{T,T}^{-1}T+N_D$,  $(Z_T, Z_{D})$ be the normal vectors from the limiting distribution,  then we have
\[
P_{\theta, D\in S_D}(\|Z_T-\theta\|_2\leq \|T-\theta\|_2|Z_D\in S_{D}, Z_{D}-\Sigma_{D, T}\Sigma_{T,T}^{-1}Z_T= N_{D})  \overset{d}{\rightarrow} \rm{Unif}[0,1]
\]
\end{proposition}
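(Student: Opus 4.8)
The plan is to push the whole problem onto the Gaussian limit and then read off the claimed pivot as a probability integral transform for a truncated Gaussian. First I would pass from $D$ to the decorrelated ``nuisance residual'' $N_D := D - \Sigma_{DT}\Sigma_{TT}^{-1}T$, well defined because assumption~(2) (invertibility of the joint covariance, hence of $\Sigma_{TT}$) holds. Joint asymptotic normality of $(T,D)$ yields joint asymptotic normality of $(T,N_D)$, and the particular coefficient $\Sigma_{DT}\Sigma_{TT}^{-1}$ forces the limiting covariance $\Cov(Z_T,N_D)=\Sigma_{TD}-\Sigma_{TT}\Sigma_{TT}^{-1}\Sigma_{TD}=0$, so the limiting $Z_T$ and $N_D$ are \emph{independent}. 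This is the only place the regression coefficient $\Sigma_{DT}\Sigma_{TT}^{-1}$ is used.

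Next I would rewrite the selection event in $T$-space. Substituting $D=\Sigma_{DT}\Sigma_{TT}^{-1}T+N_D$ into $D\in S_D=\{D': BD'\le b\}$ turns it into $B\Sigma_{DT}\Sigma_{TT}^{-1}T\le b-BN_D$; that is, $T$ (and in the limit $Z_T$) is constrained to a polyhedron $\mathcal{P}(N_D)$ whose description depends on the data only through $N_D$. Hence, conditionally on $N_D$, the selection-conditioned limiting law of $Z_T$ is $\mathcal{N}(\theta,\Sigma_{TT})$ truncated to the convex set $\mathcal{P}(N_D)$ --- a law with a density on $\R^{\dim T}$ (again by invertibility of $\Sigma_{TT}$) restricted to a set that almost surely has positive Gaussian measure since it contains the realized point. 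The real-valued functional $\|Z_T-\theta\|_2$ then has, under this conditional law, an absolutely continuous distribution: its level sets are spheres centered at $\theta$, which are null for the truncated density. Writing $\Psi(r;N_D)$ for its (continuous) conditional CDF, the probability integral transform gives that $\Psi(\|Z_T-\theta\|_2;N_D)$ is \emph{exactly} $\mathrm{Unif}[0,1]$ under the selective law. The left-hand side of the proposition is precisely $\Psi(\cdot;N_D)$ evaluated at the observed $\|T-\theta\|_2$ with the observed residual, so it remains to transfer this to the finite-sample statistic.

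For the transfer step I would argue by weak convergence under conditioning. Positive definiteness of the joint covariance forces $\Sigma_{DD}$ to be positive definite, so $Z_D$ has a density and $\partial S_D$ --- a finite union of hyperplanes --- is null for $Z_D$; thus $S_D$ is a continuity set and conditioning on $\{D_n\in S_D\}$ (which we take to have limiting probability bounded away from $0$) commutes with $(T_n,N_{D,n})\overset{d}{\to}(Z_T,N_D)$. Using also that $(r,n_D)\mapsto\Psi(r;n_D)$ is jointly continuous --- the numerator and denominator defining $\Psi$ are Gaussian probabilities of a polyhedron intersected with a ball, and of a polyhedron, whose parameters vary affinely and whose boundaries are $Z_T$-null, with the denominator almost surely positive in the conditional world --- the continuous mapping theorem yields $\Psi(\|T_n-\theta\|_2;N_{D,n})\overset{d}{\to}\Psi(\|Z_T-\theta\|_2;N_D)\sim\mathrm{Unif}[0,1]$.

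The algebraic core --- decorrelation, the polyhedral reformulation, the probability integral transform --- is routine; the delicate part, and the main obstacle, is this transfer step, since weak convergence does not in general survive conditioning. One must check (i) that the limiting selection region is a genuine continuity set with no mass on its boundary and no degenerate directions, which is exactly what the invertibility assumption (achieved in the application by the randomization $\epsilon/\sqrt{n}$) buys; (ii) equicontinuity of $\Psi$ in the nuisance argument $N_{D,n}$ as it drifts with $n$; and (iii), if $\theta$ is itself allowed to vary with $n$ (as it effectively does near the null boundary in the intended application, where $\theta$ scales like $\sqrt{n}(\Err^j_{k^*}-\Err_{k^*})$), a local-asymptotic-normality version of the argument that keeps the truncation region $\mathcal{P}(N_D)$ and the functional $\|\cdot-\theta\|_2$ centered consistently as $n\to\infty$.
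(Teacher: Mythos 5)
First, a point of orientation: the paper does not prove Proposition \ref{prop:post} at all --- it is imported verbatim as Theorem 1 of \cite{markovic2017adaptive}, and Appendix \ref{sec:postSelection} only uses it (together with Proposition \ref{prop:normality}) to derive Lemma \ref{lem:selectionRegion}. So there is no in-paper argument to compare against; what you have reconstructed is, in outline, the proof strategy of the cited source. Your architecture is the right one: decorrelate $D$ into $N_D = D - \Sigma_{DT}\Sigma_{TT}^{-1}T$ so that the limiting $Z_T$ and $N_D$ are independent; rewrite $\{BD\leq b\}$ as a polyhedron in $T$ whose description depends on the data only through $N_D$; observe that, conditionally on $N_D$ and on selection, the limiting law of $Z_T$ is a nondegenerate Gaussian truncated to that polyhedron, so the probability integral transform of $\|Z_T-\theta\|_2$ is exactly uniform; and then transfer to the finite-sample statistic.

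The gap is that the transfer step --- which you correctly identify as the crux --- is itemized rather than executed, and it is exactly where the cited theorem earns its keep. Concretely: (a) the proposition as stated nowhere assumes $P(Z_D\in S_D)>0$; without it the conditional law is undefined and the claim is vacuous, so this must either be added as a hypothesis or derived, and it does not follow from invertibility alone. (b) The assertion that conditioning on a continuity set commutes with weak convergence handles the conditional law of $(T_n,N_{D,n})$, but the pivot is the functional $\Psi(\|T_n-\theta\|_2;\,N_{D,n})$ whose nuisance argument drifts with $n$; to invoke the continuous mapping theorem you need joint continuity of $\Psi$ on the support of the limit, which in turn requires ruling out realized values of $N_D$ for which the polyhedron $\mathcal{P}(N_D)$ has zero Gaussian measure or a mass-carrying boundary --- your phrase ``almost surely positive in the conditional world'' is the claim that needs proving (e.g., via nonvanishing rows of $B\Sigma_{DT}\Sigma_{TT}^{-1}$ or an a.s.\ nonempty-interior argument), not its justification. (c) Your item (iii) on $\theta$ drifting with $n$ concerns how the proposition is applied near the null boundary, not the proposition itself, which fixes $\theta$; folding it in here muddies what is being proved. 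None of these is a wrong turn, but as written the proposal is a correct plan plus a to-do list rather than a proof; to be self-contained it would have to carry out (a) and (b), which is precisely the content the paper outsources to \cite{markovic2017adaptive}.
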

The proposition also works for the one side test. By Proposition \ref{prop:normality}, we know $T$ and $\tQ$ are asymptotically jointly normal with invertible covariance matrix. Hence, we can construct the p-value for any hypothesis value $\theta$ we are interested in based on Lemma \ref{lem:selectionRegion}:
 
 \begin{lemma}
\label{lem:selectionRegion}
Let $\theta$ be the  hypothesized mean of $T$ and $\sqrt{n}(Z_T-\theta)$ be the normal variable from the limiting distribution of $\sqrt{n}(T-\theta)$. Under the consistency, moment and dimension assumptions, we have
\begin{enumerate}
\item  The following construction of p-value achieving the asymptotic uniformity under the simple null hypothesis parameter $\theta$,
\begin{align*}
&p_{\theta} := P_{\theta,\widetilde{Q}\in  S_{A_1}, D\in S_{A_2} , N_D, N_{\tQ}}(Z_T \geq T| Z_T\in (a,  b))\overset{d}{\rightarrow}\rm{Unif}[0,1]
\end{align*}
with  $a= \underset{k:\alpha_{\widetilde{Q}, k^*}- \alpha_{\widetilde{Q},k} < 0}{\max} \frac{N_{\widetilde{Q}, k}- N_{\widetilde{Q}, k^*}}{\alpha_{\widetilde{Q}, k^*}- \alpha_{\widetilde{Q}, k}}$, $b = \underset{k:\alpha_{\widetilde{Q}, k^*}- \alpha_{\widetilde{Q},k} > 0}{\min} \frac{N_{\widetilde{Q}, k}- N_{\widetilde{Q}, k^*}}{\alpha_{\widetilde{Q}, k^*}- \alpha_{\widetilde{Q}, k}}$.
\item The type I error for the null can be controlled by controlling type I error at $\theta = 0$: $\lim_{n\rightarrow \infty}P_{H_0}(p_0 \leq \alpha) \leq \alpha$.
\end{enumerate}
By convention, we let $v_k$  represent the $k^{th}$ element of vector $v$, the maximum of an empty set is $-\infty$ and the minimum of an empty set is $\infty$.
\end{lemma}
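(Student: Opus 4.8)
The plan is to recognize the lemma as an instance of the conditional--Gaussian (polyhedral) inference of Proposition~\ref{prop:post}: the two substantive things to establish are (i) joint asymptotic normality of $\sqrt n(T,D,\tQ^T)^T$ with an invertible limiting covariance $\Sigma$, and (ii) that, after decomposing along $T$ and conditioning on the residuals, the affine selection region $\{B_Q\tQ\le 0\}$ (together with whatever is carried for $A_2$) becomes exactly $\{T\in(a,b)\}$ with $a,b$ as stated. Part~2 is then the standard monotonicity argument for a one--sided truncated--Gaussian p--value.

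For (i), I would apply Proposition~\ref{prop:normality} to the CV--error vector $(Q^j_{k^*}-Q_{k^*},Q_1,\ldots,Q_m)$, and to the subgradient/sign vector $D$ encoding $A_2$, to obtain $\sqrt n$--joint asymptotic normality with limiting mean within $o(n^{-1/2})$ of the corresponding population errors, exactly as in the proof of Theorem~\ref{thm:randomized0}; the limiting covariance may be singular. Adding the mutually independent randomizers $\epsilon,\epsilon_1,\ldots,\epsilon_m\sim\mathcal{N}(0,\tau^2)$ that define $T$ and $\tQ$ contributes a strictly positive diagonal, so the limiting covariance $\Sigma$ of $\sqrt n(T,D,\tQ^T)^T$ is invertible. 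Since $A_1=\{\tQ_{k^*}\le\tQ_k\ \forall k\}$ and, for the lasso, $A_2$ are affine constraints on $(\tQ,D)$, assumptions (1)--(2) of Proposition~\ref{prop:post} are met with $T$ the test statistic.

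For (ii), write $\tQ=\alpha_{\tQ}T+N_{\tQ}$ with $\alpha_{\tQ}=\Sigma_{\tQ T}\Sigma_{TT}^{-1}$, so $N_{\tQ}$ is asymptotically independent of $T$. Each row $k\ne k^*$ of $B_Q\tQ\le 0$ reads $(\alpha_{\tQ,k^*}-\alpha_{\tQ,k})\,T\le N_{\tQ,k}-N_{\tQ,k^*}$; dividing by $\alpha_{\tQ,k^*}-\alpha_{\tQ,k}$ and tracking its sign, rows with positive coefficient give upper bounds on $T$ whose minimum is $b$, rows with negative coefficient give lower bounds whose maximum is $a$, and rows with zero coefficient constrain $N_{\tQ}$ alone (fixed once we condition on it); the constraints for $A_2$, carried through $N_D$, enter in the same way. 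Proposition~\ref{prop:post} then says that, conditioned on $N_{\tQ}$ (and $N_D$) and on the selection, the limiting law of $Z_T$ is $\mathcal{N}(\theta,\Sigma_{TT})$ truncated to $(a,b)$; the survival transform $p_\theta=P_{\theta,\tQ\in S_{A_1},D\in S_{A_2},N_D,N_{\tQ}}(Z_T\ge T\mid Z_T\in(a,b))$ is therefore asymptotically $\rm{Unif}[0,1]$, which is part~1.

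For part~2, under $H_0$ the limiting mean $\theta$ of $\sqrt n T$ satisfies $\theta\le 0$. For fixed $N_{\tQ}$ (hence fixed $(a,b)$), $t\mapsto P_0(Z_T\ge t\mid Z_T\in(a,b))$ is strictly decreasing, while a normal truncated to a fixed interval is stochastically increasing in its mean; hence $p_0$ is, conditionally on $N_{\tQ}$, stochastically decreasing in $\theta$, so $P_\theta(p_0\le\alpha\mid N_{\tQ})\le P_0(p_0\le\alpha\mid N_{\tQ})=\alpha$ for every $\theta\le 0$, and integrating over $N_{\tQ}$ gives $\lim_{n\to\infty}P_{H_0}(p_0\le\alpha)\le\alpha$. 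The main obstacle is step (i): making rigorous that the un--randomized CV vector is asymptotically normal with the correct $o(n^{-1/2})$ centering \emph{and} that the randomization genuinely restores an invertible limiting covariance, so that Proposition~\ref{prop:post} applies verbatim; this is exactly where the consistency, range and moment assumptions carry the weight, via Proposition~\ref{prop:normality} and the bound $\sqrt n\|\Err-\mu\|_\infty\to 0$, just as in Theorem~\ref{thm:randomized0}. The interval bookkeeping in (ii) and the monotonicity in part~2 are then routine.
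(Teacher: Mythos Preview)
Your proposal is correct and follows essentially the same route as the paper. The paper's own argument is extremely terse: it simply notes (just before the lemma) that by Proposition~\ref{prop:normality} the pair $(T,\tQ)$ is asymptotically jointly normal with invertible covariance, so Proposition~\ref{prop:post} applies for part~1, and then states that ``the second part of Lemma~\ref{lem:selectionRegion} is based on Proposition~\ref{prop:monotone}'' (the Lee et al.\ monotonicity lemma for truncated Gaussians). Your write-up supplies the explicit bookkeeping that the paper omits---in particular, the derivation of the interval endpoints $a,b$ from the rows $(\alpha_{\tQ,k^*}-\alpha_{\tQ,k})T\le N_{\tQ,k}-N_{\tQ,k^*}$ of $B_Q\tQ\le 0$, and the stochastic-ordering argument that turns Proposition~\ref{prop:monotone} into the one-sided bound $\lim_{n\to\infty}P_{H_0}(p_0\le\alpha)\le\alpha$. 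One small note: in the paper's Appendix~\ref{sec:postSelection} the selection event $A_2$ is explicitly \emph{neglected} for this lemma (even though the notation $D\in S_{A_2}$, $N_D$ lingers in the statement), so your treatment of $A_2$ via the subgradient vector is more than the paper actually carries out; it does no harm, but is not strictly needed to match the paper.
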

The second part of Lemma \ref{lem:selectionRegion} is based on Proposition \ref{prop:monotone} below. \begin{proposition}(\cite{lee2016exact}, Lemma A.1)\label{prop:monotone}Let $F_{\theta}(x) := F^{[a, b]}_{\theta, \sigma^2}(x)$ denote the cumulative distribution function of a Gaussian random variable with mean $\theta$ and variance $\sigma^2$ whose domain of $x$ is $[a, b]$. $F_{\theta}(x)$ is monotone decreasing in $\theta$.
\end{proposition}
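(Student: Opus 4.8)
The plan is to recognize $F_\theta(x)$ as the CDF of the Gaussian location family $N(\theta,\sigma^2)$ truncated to $[a,b]$, and to exploit the monotone likelihood ratio (MLR) property of that family, which yields first-order stochastic dominance and hence pointwise monotonicity of the CDF in $\theta$. Writing $\phi$ and $\Phi$ for the standard normal density and CDF, the truncated density at mean $\theta$ is $f_\theta(x)=\sigma^{-1}\phi((x-\theta)/\sigma)/[\Phi((b-\theta)/\sigma)-\Phi((a-\theta)/\sigma)]$ supported on $[a,b]$, and $F_\theta(x)=\int_a^x f_\theta(t)\,dt$.

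First I would verify the MLR property: for $\theta_1<\theta_2$, the ratio $f_{\theta_2}(x)/f_{\theta_1}(x)$ equals a constant (the ratio of the two truncation normalizers, which does not depend on $x$) times $\phi((x-\theta_2)/\sigma)/\phi((x-\theta_1)/\sigma)=\exp((2x(\theta_2-\theta_1)+\theta_1^2-\theta_2^2)/(2\sigma^2))$, which is strictly increasing in $x$. The key observation making the truncation harmless is precisely that it contributes only an $x$-free multiplicative factor, so the location-family MLR survives restriction to $[a,b]$.

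Next I would convert MLR into stochastic dominance. The difference $g(x):=f_{\theta_2}(x)-f_{\theta_1}(x)=f_{\theta_1}(x)[\,f_{\theta_2}(x)/f_{\theta_1}(x)-1\,]$ has a single-crossing structure, since $f_{\theta_1}>0$ on $[a,b]$ and the bracket is strictly increasing and therefore crosses zero at most once; thus $g\le 0$ to the left of some threshold $x^\star$ and $g\ge 0$ to its right. Because both densities integrate to one over $[a,b]$, $\int_a^b g=0$. For $x\le x^\star$ the integrand is nonpositive on $[a,x]$, giving $F_{\theta_2}(x)-F_{\theta_1}(x)=\int_a^x g\le 0$; for $x\ge x^\star$ I would write $\int_a^x g=-\int_x^b g$ and use $g\ge 0$ on $[x,b]$ to reach the same conclusion. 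Either way $F_{\theta_2}(x)\le F_{\theta_1}(x)$, i.e.\ $F_\theta(x)$ is decreasing in $\theta$, strictly so whenever $a<x<b$.

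The argument has no deep obstacle; the only points demanding care are making the single-crossing claim rigorous (handled by the display above) and checking the degenerate cases where the ratio never equals $1$ on $[a,b]$, which force one density to dominate the other throughout and are incompatible with both integrating to one unless the densities coincide. As an alternative that avoids the crossing bookkeeping, I could observe that the truncated family is a one-parameter exponential family in the natural parameter $\eta=\theta/\sigma^2$ with sufficient statistic $x$, differentiate under the integral to obtain $\partial_\eta F_\eta(x)=\Cov_\eta(X,\mathbbm{1}\{X\le x\})$, and note this covariance is $\le 0$ because $X\mapsto X$ is increasing while $X\mapsto\mathbbm{1}\{X\le x\}$ is decreasing, via the i.i.d.-copy identity $\Cov(g,h)=\tfrac12\,\E[(g(X)-g(X'))(h(X)-h(X'))]$; since $\eta$ increases with $\theta$, monotonicity in $\theta$ follows.
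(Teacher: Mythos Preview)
Your argument is correct. Both routes you outline---the MLR/single-crossing argument and the exponential-family covariance computation $\partial_\eta F_\eta(x)=\Cov_\eta(X,\mathbbm{1}\{X\le x\})\le 0$---are valid and standard ways to establish stochastic monotonicity of the truncated Gaussian location family.

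There is, however, nothing to compare against in this paper: Proposition~\ref{prop:monotone} is stated as a citation of Lemma~A.1 in \cite{lee2016exact} and is not re-proved here. The paper simply invokes it to justify that controlling the type~I error at $\theta=0$ suffices for the composite null $\theta\le 0$. So your proposal supplies a proof where the paper offers none; either of your two arguments would serve as a self-contained justification, and the single-crossing version is essentially the one Lee et~al.\ give in their appendix.
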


\bibliographystyle{agsm}
\bibliography{proximity.bib}
\end{document}